\documentclass[a4paper,10pt,]{article}
\usepackage[utf8,]{inputenc}

\usepackage{setspace}
\onehalfspacing

\usepackage{amsmath,amsfonts,amssymb}

\newtheorem{theorem}{Theorem}

\newtheorem{definition}[theorem]{Definition}
\newtheorem{remark}[theorem]{Remark}
\newtheorem{example}[theorem]{Example}
\newtheorem{corollary}[theorem]{Corollary}
\newtheorem{proposition}[theorem]{Proposition}
\newenvironment{proof}[1][Proof]{\noindent\textbf{#1: }}{\ \rule{0.5em}{0.5em}}

\title{Continuous set packing problems\\ and near-Boolean functions}
\author{Giovanni Rossi\\
\footnotesize{Departments of Computer Science and Engineering - DISI, and Mathematics}\\
\footnotesize{University of Bologna, Mura Anteo Zamboni 7, Italy 40126, \textsl{giovanni.rossi6@unibo.it}}}

\begin{document}

\maketitle

\begin{abstract}
Given a family of feasible subsets of a ground set, the packing problem is to find a largest subfamily of pairwise disjoint family members. Non-approximability
renders heuristics attractive viable options, while efficient methods with worst-case guarantee are a key concern in computational complexity. This work proposes a
novel near-Boolean optimization method relying on a polynomial multilinear form with variables ranging each in a high-dimensional unit simplex, rather than in
the unit interval as usual. The variables are the elements of the ground set, and distribute each a unit membership over those feasible subsets where they are included.
The given problem is thus translated into a continuous version where the objective is to maximize a function taking values on collections of points in a unit hypercube.
Maximizers are shown to always include collections of hypercube disjoint vertices, i.e. partitions of the ground set, which constitute feasible solutions for the
original discrete version of the problem. A gradient-based local search in the expanded continuous domain is designed. Approximations with polynomial multilinear form of
bounded degree and near-Boolean games are also finally discussed.\smallskip

\footnotesize{\textsl{Keywords:} Set packing, Pseudo-Boolean function, Polynomial multilinear extension, Local search.}\smallskip

\footnotesize{\textsl{MSC 2010:} 06E30; 06A07; 06D72; 94C10.}
\end{abstract}

\section{Introduction}
Consider a finite set $N=\{1,\ldots ,n\}$ of items to be packed into feasible subsets, where these latter constitute a family $\mathcal F\subseteq 2^N=\{A:A\subseteq N\}$.
The problem is to find a subfamily $\mathcal F^*\subseteq\mathcal F$ of pairwise disjoint feasible subsets with largest size $|\mathcal F^*|$. In the weighted version,
a function $w:\mathcal F\rightarrow\mathbb R_+$ identifies as optimal those such subfamilies $\mathcal F^*$ with maximum weight $W(\mathcal F^*)=\sum_{A\in\mathcal F^*}w(A)$.
Maximizing $|\mathcal F^*|$ is equivalent to setting $w(A)=1$ for all $A\in\mathcal F$. Thus this work proposes to use the polynomial multilinear extension, or MLE
for short, of set functions (such as $w$) in order to evaluate families of fuzzy feasible subsets. Although unfeasible, such families shall still drive the search towards
locally optimal feasible ones.

Set packing is a key combinatorial optimization problem \cite{KorteVygen2002} extensively studied in computational complexity, where the aim is to find efficient algorithms
whose output approximates optimal solutions within a provable bounded factor. In that field, the focus is placed mostly on non-approximability results for $k$-set packing,
where the size of every family member is no greater than some $k\ll n$ (and with unit weight for each member as above, \cite{Trevisan01}). Recall that if all family members
have size $k=2$, then the problem is to find a maximal matching in a graph with vertex set $N$, and an efficient (i.e. with polynomial running time) algorithm capable
to output an exact solution is known to exist \cite{Papadimitriou94}. In fact, if $k>2$, then $k$-set packing may be rephrased in terms of vertex clouring in hypergraphs,
with special focus on the $d$-regular and $k$-uniform case, where every element of the ground set is present in precisely $d>1$ feasible subsets
(i.e. $|\{A:i\in A\in\mathcal F\}|=d$ for every $i\in N$), each of which, in turn, has size $k$ (i.e. $|A|=k$ for every $A\in\mathcal F$) \cite{Hazan++06complexity}.

Set packing aslo has important applications, among which combinatorial auctions constitute a main and lucrative example: the ground set may consist of items to be
sold in bundles (or subsets) towards revenue maximization, and once bids are processed the issue may be tackled as a maximum-weight set packing problem, with maximum received
bids on bundles as weights \cite{Sandholm02}. Given the exponentially large size of the search space, revenue maximization often leads to use heuristics with no worst-case
guarantee or, more simply, to sell each item independently but simultaneously over a sufficiently long time period \cite{Milgrom04}.

The approach to set packing problems proposed in the sequel replaces standard pseudo-Boolean optimization \cite{BorosHammer02} with a novel near-Boolean method. While the former
employs MLE to switch from $\{0,1\}$ to $[0,1]$ as the domain of each of the $n$ variables, the proposed near-Boolean method relies on $n$ variables ranging each in the $2^{n-1}$-set
of extreme points of a unit simplex, and employs MLE to include the continuum provided by the whole simplex. The $n$ variables correspond to the elements $i\in N$ of the ground set,
while the extreme points of each simplex are indexed by those subsets where each element is included. Then, the MLE of the resulting near-Boolean function allows to evaluate collections
of fuzzy subsets of $N$ or, equivalently, fuzzy subfamilies of feasible subsets $A\in\mathcal F$. The objective function to be maximized takes thus values over the $n$-product of
$2^{n-1}-1$-dimensional unit simplices, allowing to design a flexible gradient-based local search.

The following section comprenshively details the framework for the full-dimensional case $\mathcal F=2^N$. This not only seems useful for applications, but most importantly allows to
clearly see next that by simply introducing the empty set $\emptyset$ and all $n$ singletons $\{i\}\in 2^N$ into the family $\mathcal F$ of feasible subsets (with null weights
$w(\emptyset)=0=w(\{i\}$ if $\{i\}\notin\mathcal F$) the whole class of set packing problems may be handled by the proposed method. The gradient-based local search differs when switching
from the full-dimensional case to the lower-dimensional one $\mathcal F\subset 2^N$, in that with the latter a cost function $c:\mathcal F\rightarrow\mathbb N$ also enters the picture,
in line with greedy approaches to weighted set packing \cite{Chandra+2001}. The cost $c(A)=|\{B:B\in\mathcal F,B\cap A\neq\emptyset\}|$ of including a feasible subset in the packing
is the number of members with which it has non-empty intersection (itself included, hence $c(A)\in\mathbb N,A\in\mathcal F$).

Note that maximum-weight set packing may be tackled through constrained maximization of standard pseudo-Boolean function $v:\{0,1\}^{|\mathcal F|}\rightarrow\mathbb R_+$
\begin{equation*}
\text{given by }v\left(x_{A_1},\ldots ,x_{A_{|\mathcal F|}}\right)=\sum_{1\leq k\leq |\mathcal F|}x_{A_k}w(A_k)\text{ , s.t.}
\end{equation*}
\begin{equation*}
A_k\cap A_l\neq\emptyset\Rightarrow x_{A_k}+x_{A_l}\leq 1\text{ for all }1\leq k<l\leq|\mathcal F|\text ,
\end{equation*}
where $x_A\in\{0,1\}$ for all $A\in\mathcal F=\{A_1,\ldots A_{|\mathcal F|}\}$. Also, $v$ can be replaced with $xMx\simeq v$, where $M$ is a suitable
$|\mathcal F|\times|\mathcal F|$-matrix and $x=(x_{A_1},\ldots ,x_{A_{|\mathcal F|}})$ \cite{AlidaeeEtAl2008}. An heuristic then finds a constrained maximizer $x^*$, while
the corresponding solution is $\mathcal F^*=\{A:x_A^*=1\}$. This differs from what is proposed here, in many respects, the most evident of which being that $v$
has $|\mathcal F|$ constrained Boolean variables, while the expanded MLE developed below has $n$ unconstrained near-Boolean variables. 

\section{Full-dimensional case}
The $2^n$-set $\{0,1\}^n$ of vertices of the $n$-dimensional unit hypercube $[0,1]^n$ corresponds one-to-one to the (power) set $2^N$ of subsets $A\subseteq N$ through
characteristic functions $\chi_A:N\rightarrow\{0,1\},A\in 2^N$ defined by $\chi_A(i)=1$ if $i\in A$ and $\chi_A(i)=0$ if $i\in N\backslash A=A^c$, while
collection $\{\zeta(A,\cdot):A\in 2^N\}$ is a linear basis of the vector space $\mathbb R^{2^n}$ of real-valued functions $w$ on $2^N$, where zeta function
$\zeta:2^N\times2^N\rightarrow\mathbb R$ is the element of the incidence algebra \cite{Rota64,Aigner79} of Boolean lattice $(2^N,\cap,\cup)$ defined by $\zeta(A,B)=1$ if
$B\supseteq A$ and $\zeta(A,B)=0$ if $B\not\supseteq A$. Linear combination $w(B)=\sum_{A\in 2^N}\mu^w(A)\zeta(A,B)=\sum_{A\subseteq B}\mu^w(A)$ for $B\in 2^N$ applies to
any $w$, with M\"obius inversion $\mu^w:2^N\rightarrow\mathbb R$ uniquely given by ($\subset$ is strict inclusion) $\mu^w(A)=$
\begin{equation*}
=\sum_{B\subseteq A}(-1)^{|A\backslash B|}w(B)\text{ }\left(\text{with }\zeta(B,A)=(-1)^{|A\backslash B|}\right)
\end{equation*}
\begin{equation*}
=w(A)-\sum_{B\subset A}\mu^w(B)\text{ }\left(\text{recursion, with }w(\emptyset)=0\right)\text .
\end{equation*}
Given this essential combinatorial \textit{\textquotedblleft analog of the fundamental theorem of the calculus\textquotedblright} \cite{Rota64}, the MLE
$f^w:[0,1]^n\rightarrow\mathbb R$ of $w$ takes values $w(B)=$
\begin{equation*}
=f^w(\chi_B)=\sum_{A\in 2^N}\left(\prod_{i\in A}\chi_B(i)\right)\mu^w(A)=\sum_{A\subseteq B}\mu^w(A)
\end{equation*}
\begin{equation}
\text{on vertices, and }f^w(q)=\sum_{A\in 2^N}\left(\prod_{i\in A}q_i\right)\mu^w(A)
\end{equation}
on any point $q=(q_1,\ldots ,q_n)\in[0,1]^n$. Conventionally, $\prod_{i\in\emptyset}q_i:=1$ \cite[p. 157]{BorosHammer02}.

Let $2^N_i=\{A:i\in A\in 2^N\}=\left\{A_1,\ldots ,A_{2^{n-1}}\right\}$ be the $2^{n-1}$-set of subsets containing each $i\in N$. Simplex
\begin{equation*}
\Delta_i=\left\{\left(q_i^{A_1},\ldots ,q_i^{A_{2^{n-1}}}\right)\in\mathbb R^{2^{n-1}}_+: \sum_{1\leq k\leq 2^{n-1}}q_i^{A_k}=1\right\}
\end{equation*}
has dimension $2^{n-1}-1$ and generic point $q_i\in\Delta_i$.
\begin{definition}
A fuzzy cover $\textbf q$ specifies a membership distribution for each $i\in N$ over the $2^{n-1}$ subsets containing it, i.e. 
$\textbf q=(q_1,\ldots ,q_n)\in\Delta_N=\times_{1\leq i\leq n}\text{ }\Delta_i$.
\end{definition}
Equivalently, $\textbf q=\left\{q^A:\emptyset\neq A\in 2^N,q^A\in[0,1]^n\right\}$ is a $2^n-1$-set whose elements $q^A=\left(q^A_1,\ldots ,q^A_n\right)$ are $n$-vectors
corresponding to non-empty subsets $A\in 2^N$ and specifying a membership $q_i^A$ for each $i\in N$, with $q_i^A\in[0,1]$ if $i\in A$ while $q_i^A=0$ if
$i\in A^c$. Fuzzy covers being collections of points in $[0,1]^n$, and the MLE $f^w$ of $w$ allowing precisely to evaluate such points, the global worth $W(\textbf q)$ of
$\textbf q\in\Delta_N$ is the sum over all $q^A,A\in 2^N$ of $f^w(q^A)$ as defined by (1). That is,
\begin{equation*}
W(\textbf q)=\sum_{A\in 2^N}f^w(q^A)=\sum_{A\in 2^N}\left[\sum_{B\subseteq A}\left(\prod_{i\in B}q_i^A\right)\mu^w(B)\right]\text ,
\end{equation*}
\begin{equation}
\text{or }W(\textbf q)=\sum_{A\in 2^N}\left[\sum_{B\supseteq A}\left(\prod_{i\in A}q_i^B\right)\right ]\mu^w(A)\text .
\end{equation}
\begin{example}
For $N=\{1,2,3\}$, define $w(\{1\})=w(\{2\})=w(\{3\})=0.2$, $w(\{1,2\})=0.8$, $w(\{1,3\})=0.3$, $w(\{2,3\})=0.6$, $w(N)=0.7$. Membership distributions of elements
$i=1,2,3$ over $2^N_i$ are $q_1\in\Delta_1,q_2\in\Delta_2,q_3\in\Delta_3$,
\begin{equation*}
q_1=\left(\begin{array}{c}q_1^1 \\q_1^{12} \\q_1^{13} \\q_1^N\end{array}\right)\text{, }
q_2=\left(\begin{array}{c}q_2^2 \\q_2^{12} \\q_2^{23} \\q_2^N\end{array}\right)\text{, } 
q_3=\left(\begin{array}{c}q_3^3 \\q_3^{13} \\q_3^{23} \\q_3^N\end{array}\right)\text .
\end{equation*}
If $\hat q_1^{12}=\hat q_2^{12}=1$, then any membership $q_3\in\Delta_3$ yields
\begin{eqnarray*}
W(\hat q_1,\hat q_2,q_3)&=&w(\{1,2\})\\
&+&\left(q_3^3+q_3^{13}+q_3^{23}+q_3^N\right)\mu^w(\{3\})\\
&=&w(\{1,2\})+w(\{3\})=1\text .
\end{eqnarray*}
This means that there is a continuum of fuzzy covers achieving maximum worth, i.e. $1$. In order to select the one 
$\hat{\textbf q}=(\hat q_1,\hat q_2,\hat q_3)$ where $\hat q_3^3=1$, attention must be placed only on \textit{exact} ones, defined hereafter.
\end{example}
For any two fuzzy covers $\textbf q=\{q^A:\emptyset\neq A\in 2^N\}$ and $\hat{\textbf q}=\{\hat q^A:\emptyset\neq A\in 2^N\}$, define $\hat{\textbf q}$ to be a \textit{shrinking} of
$\textbf q$ if there is a subset $A$, with $\sum_{i\in A}q_i^A>0$ and
\begin{eqnarray*}
\hat q^B_i&=&\left\{\begin{array}{c} q^B_i\text{ if } B\not\subseteq A\\ 0\text{ if }B=A\end{array}\right.\text{ for all }B\in 2^N,i\in N\text ,\\
\sum_{B\subset A}\hat q^B_i&=&q^A_i+\sum_{B\subset A}q_i^B\text{ for all }i\in A\text{.}
\end{eqnarray*}
In words, a shrinking reallocates the whole membership mass $\sum_{i\in A}q_i^A>0$ from $A\in 2^N$ to all proper subsets $B\subset A$, involving all and only
those elements $i\in A$ with strictly positive membership $q_i^A>0$.
\begin{definition}
Fuzzy cover $\textbf q\in\Delta_N$ is exact as long as $W(\textbf q)\neq W(\hat{\textbf q})$ for all shrinkings $\hat{\textbf q}$ of \textbf q.
\end{definition}
\begin{proposition}
If $\textbf q$ is exact, then $\left|\left\{i\in A:q_i^A>0\right\}\right|\in\{0,|A|\}$ for all $A\in 2^N$.
\end{proposition}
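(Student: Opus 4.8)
I would prove the contrapositive. Suppose that for some $A$ the support $A^{\ast}:=\{i\in A:q_i^A>0\}$ of $q^A$ satisfies $\emptyset\neq A^{\ast}\subsetneq A$ (hence $|A|\geq 2$); the goal is to construct a shrinking $\hat{\textbf{q}}$ of $\textbf{q}$ with $W(\hat{\textbf{q}})=W(\textbf{q})$, which shows $\textbf{q}$ is not exact. Every shrinking I use is a shrinking of this particular $A$, so $\hat q^A=0$ and only the memberships of the elements of $A^{\ast}$ get reallocated among subsets $B\subset A$, the elements of $A\setminus A^{\ast}$ carrying no mass on $A$ to move.

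I would first discard the part of $W$ that no such shrinking can affect. From (2) together with the simplex constraints $\sum_{B\ni i}q_i^B=1$, the term with $A=\emptyset$ vanishes and the term with $A=\{i\}$ equals $w(\{i\})$, so $W(\textbf{q})=\sum_{i\in N}w(\{i\})+\sum_{B:\,|B|\geq 2}\mu^w(B)\sum_{C\supseteq B}\prod_{i\in B}q_i^C$. Hence $W(\hat{\textbf{q}})-W(\textbf{q})=\sum_{B\subseteq A,\,|B|\geq 2}\mu^w(B)\,\Delta P_B$, where $\Delta P_B=\sum_{B\subseteq C\subseteq A}\big(\prod_{i\in B}\hat q_i^C-\prod_{i\in B}q_i^C\big)$ depends only on how the mass is moved inside $A$, the summand $C=A$ contributing $-\prod_{i\in B}q_i^A$. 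When $|A^{\ast}|=1$ this sum is empty, so any shrinking of $A$ preserves $W$; assume henceforth $|A^{\ast}|\geq 2$.

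If $q^{A^{\ast}}$ has empty support, a single explicit shrinking works: set $\hat q_i^{A^{\ast}}=q_i^A$ for $i\in A^{\ast}$, $\hat q^A=0$, all else unchanged; then $f^w(q^{A^{\ast}})=0$ while $f^w(\hat q^{A^{\ast}})=f^w(q^A)$, since the two points agree on $A^{\ast}$ and vanish off it, so $W$ is unchanged. The substantive case is $q^{A^{\ast}}$ of full support $A^{\ast}$: then piling $q^A$'s mass onto $A^{\ast}$ alters $f^w(q^{A^{\ast}})$ by the cross-term $f^w(q^{A^{\ast}}+q^A)-f^w(q^{A^{\ast}})-f^w(q^A)$, which is generally nonzero, so part of $q^{A^{\ast}}$'s own mass must also be reshuffled. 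I would treat the whole family of shrinkings of $A$ at once: choosing one amounts to distributing, for each $i\in A$, the total mass $\sigma_i:=\sum_{B\subseteq A,\,i\in B}q_i^B$ over $\{B:i\in B\subsetneq A\}$, and this parameter set is a nonempty compact convex — hence connected — polytope on which $\Delta W$ is a continuous polynomial. At the choice sending every $\sigma_i$ to the singleton $\{i\}$ one computes $\Delta W=-H$, where $H:=\sum_{B\subseteq A}\sum_{C\subseteq B,\,|C|\geq 2}\big(\prod_{i\in C}q_i^B\big)\mu^w(C)$ is the total degree-$\geq 2$ worth carried by the points $\{q^B:B\subseteq A\}$; if $H=0$ this choice already works. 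Otherwise I would exhibit a second admissible choice — built by directing the mass onto $A^{\ast}$ and, if necessary, onto a proper subset of $A$ still containing $A^{\ast}$, with the split tuned so as to overshoot — at which $\Delta W$ has sign opposite to $-H$, and conclude by the intermediate value theorem along a path between the two choices inside the polytope.

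The hard part is exactly this final step: showing that $\Delta W$ really attains $0$ over the reallocation polytope. Because $\mu^w$ can change sign on the sets $B\subseteq A^{\ast}$ with $|B|\geq 2$, a termwise argument is unavailable; the quantitative handle is that $A^{\ast}$ is the common support of $q^{A^{\ast}}$ and of $q^A$, so the obstructing monomials $\prod_{i\in B}q_i^{A^{\ast}}$ and $\prod_{i\in B}q_i^A$ are strictly positive, which is what lets the transfer of mass between the $A^{\ast}$-slot and the singletons drive $\Delta W$ continuously through $0$.
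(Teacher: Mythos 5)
Your reduction is sound as far as it goes: passing to the contrapositive, using the simplex constraints to strip off the singleton and empty-set terms so that only $\sum_{|B|\ge 2}\mu^w(B)\,\Delta P_B$ matters, and disposing of the sub-cases $|A^\ast|=1$ and $\mathrm{supp}(q^{A^\ast})=\emptyset$ (though in the first of these it is not true that \emph{any} shrinking preserves $W$ --- you must send $i_0$'s mass to the singleton $\{i_0\}$, since routing it to a $C\subset A$ on which other elements carry positive membership does perturb degree-$\ge 2$ products). The genuine gap is the one you name yourself: you never establish that the reallocation polytope contains a point where $\Delta W$ has sign opposite to $-H$, and this assertion is not innocent. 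Take $n=4$, $A=\{1,2,3\}$, $q_1^A=q_2^A=\tfrac12$, $q_3^A=0$, all other memberships of $1,2,3$ on subsets of $A$ equal to $0$, and $\mu^w(B)=0$ for every $|B|\ge 2$ except $\mu^w(\{1,2\})=1$. Then $H=\tfrac14$, while after any shrinking the only surviving relevant monomial is $\hat q_1^{\{1,2\}}\hat q_2^{\{1,2\}}\le\tfrac14$, so $\Delta W\le 0$ on the entire polytope and no opposite-sign point exists; the zero is attained only at a boundary vertex. (This instance happens to fall in your empty-support branch, but perturbing $q_1^{\{1,2\}},q_2^{\{1,2\}}$ to small positive values lands it in your ``substantive case'' with an achievable overshoot that is only $O(\epsilon)$ --- nothing in your argument controls it.) In general, with sign-indefinite $\mu^w$, what you need is $\max\Delta W\ge 0\ge\min\Delta W$ over the compact connected polytope, and proving the first inequality is exactly the content of the proposition in your formulation; as written the proposal is a plan, not a proof.

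The idea you are missing --- and the one the paper's proof is built on --- is to run the termwise argument on the \emph{monomials} rather than on the $\mu^w$-weighted terms. The paper imposes, besides mass conservation $\sum_{B\in 2^N_i\cap 2^{A^+}}\hat q_i^B=q_i^A+\sum_{B\in 2^N_i\cap 2^{A^+}}q_i^B$ for each $i\in A^+$, the separate condition $\prod_{i\in B}\hat q_i^B=\prod_{i\in B}q_i^B+\prod_{i\in B}q_i^A$ for every $B\subseteq A^+$ with $|B|>1$. Once each monomial aggregate appearing in $W$ is conserved individually, the signs of the coefficients $\mu^w(B)$ are irrelevant, so your objection that ``a termwise argument is unavailable'' applies only to the weighted terms and evaporates at the level of the products. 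Feasibility of each product equation rests on $\prod_{i}(x_i+y_i)\ge\prod_i x_i+\prod_i y_i$ for nonnegative reals --- this is where the full common support of $q^A$ on $A^+$ is actually used --- and the paper then argues (by comparing the $2^\alpha-1$ equations with the $\sum_k k\binom{\alpha}{k}>2^\alpha$ unknowns) that a continuum of simultaneous solutions exists. If you prefer to keep your aggregate/IVT route, you must supply the two one-sided bounds on $\Delta W$ over the polytope, and the natural way to do that is to first prove the monomial-matching lemma anyway.
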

\begin{proof}
For $\emptyset\subset A^+(\textbf q)=\left\{i:q_i^A>0\right\}\subset A$, with $\alpha=|A^+(\textbf q)|>1$, note that\bigskip

$f^w(q^A)=\sum_{B\subseteq A^+(\textbf q)}\left(\prod_{i\in B}q_i^A\right)\mu^w(B)$.\bigskip

Let shrinking $\hat{\textbf q}$, with $\hat q^{B'}=q^{B'}$ if $B'\not\in 2^{A^+(\textbf q)}$, satisfy conditions\bigskip

1) $\sum_{B\in 2^N_i\cap 2^{A^+(\textbf q)}}\hat q_i^B=q_i^A+\sum_{B\in 2^N_i\cap 2^{A^+(\textbf q)}}q_i^B\text{ for all }i\in A^+(\textbf q)$, and\bigskip

2) $\prod_{i\in B}\hat q_i^B=\prod_{i\in B}q_i^B+\prod_{i\in B}q_i^A\text{ for all }B\in 2^{A^+(\textbf q)}$ such that $|B|>1$.\bigskip

These are $2^\alpha-1$ equations with $\sum_{1\leq k\leq\alpha}k\binom{\alpha}{k}>2^{\alpha}$ variables $\hat q_i^B,B\subseteq A^+(\textbf q)$, $i\in B$. Thus there is a
continuum of solutions, each providing precisely a shrinking $\hat{\textbf q}$ where\bigskip

$\sum_{B\in 2^{A^+(\textbf q)}}f^w(\hat q^B)=f^w(q^A)+\sum_{B\in 2^{A^+(\textbf q)}}f^w(q^B)$.\bigskip

This entails that \textbf q is not exact.
\end{proof}

Partitions $P=\{A_1,\ldots ,A_{|P|}\}\subset 2^N$ of $N$ are families of pairwise disjoint subsets called blocks \cite{Aigner79}, that is
$A_k\cap A_l=\emptyset,1\leq k<l\leq |P|$, with union $N=\cup_{1\leq k\leq |P|}\text{ }A_k$. Any $P$ corresponds to the collection $\{\chi_A:A\in P\}$ of those
$|P|$ hypercube vertices identified by the characteristic functions of its blocks (see above). Partitions $P$ can also be seen as $\textbf p\in\Delta_N$
where $p_i^A=1$ for all $A\in P,i\in A$, i.e. exact fuzzy covers where each $i\in N$ concentrates its whole membershisp on a unique $A\in 2^N_i$, thus justifying
the following.
\begin{definition}
Fuzzy partitions are exact fuzzy covers.
\end{definition}
Ojective function $W:\Delta_N\rightarrow\mathbb R$ includes among its extremizers (non-fuzzy) partitions. This expands a basic result in pseudo-Boolean optimization.
Denote by $ex(\Delta_i)$ the $2^{n-1}$-set of extreme points of $\Delta_i$. For $\textbf q\in\Delta_N,i\in N$, let $\textbf q=q_i|\textbf q_{-i}$, with
$q_i\in\Delta_i$ and $\textbf q_{-i}\in\Delta_{N\backslash i}=\times_{j\in N\backslash i}\text{ }\Delta_j$. Then, for any $w$,
\begin{equation*}
W(\textbf q)=\sum_{A\in 2^N_i}f^w(q^A)+\sum_{A'\in 2^N\backslash 2^N_i}f^w(q^{A'})=
\end{equation*}
\begin{eqnarray*}
&=&\sum_{A\in2^N_i}\sum_{B\subseteq A\backslash i}\left(\prod_{j\in B}q^A_j\right)\Big(q^A_i\mu^w(B\cup i)+\mu^w(B)\Big)\\
&+&\sum_{A'\in 2^N\backslash 2^N_i}\sum_{B'\subseteq A'}\left(\prod_{j'\in B'}q^{A'}_{j'}\right)\mu^w(B')
\end{eqnarray*}
at all $\textbf q\in\Delta_N$ and for all $i\in N$. Now define
\begin{equation*}
W_i(q_i|\textbf q_{-i})=\sum_{A\in 2^N_i}q^A_i\left[\sum_{B\subseteq A\backslash i}\left(\prod_{j\in B}q^A_j\right)\mu^w(B\cup i)\right]\text ,
\end{equation*}
\begin{eqnarray*}
W_{-i}(\textbf q_{-i})&=&\sum_{A\in 2^N_i}\left[\sum_{B\subseteq A\backslash i}\left(\prod_{j\in B}q^A_j\right)\mu^w(B)\right]+\\
&+&\sum_{A'\in 2^N\backslash 2^N_i}\left[\sum_{B'\subseteq A'}\left(\prod_{j'\in B'}q^{A'}_{j'}\right)\mu^w(B')\right]\text ,
\end{eqnarray*}
\begin{equation}
\text{yielding }W(\textbf q)=W_i(q_i|\textbf q_{-i})+W_{-i}(\textbf q_{-i})\text{.}
\end{equation}
\begin{proposition}
For all $\textbf q\in\Delta_N$, there are $\underline{\textbf q},\overline{\textbf q}\in\Delta_N$
\begin{equation*}
\text{such that }\left\{\begin{array}{c}
\text{(i) }W(\underline{\textbf q})\leq W(\textbf q)\leq W(\overline{\textbf q})\text{ and,}\\
\text{(ii) }\underline q_i,\overline q^i\in ex(\Delta_i)\text{ for all }i\in N\text .
\end{array}\right.
\end{equation*}
\end{proposition}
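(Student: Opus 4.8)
The plan is to exploit decomposition (3), which exhibits $W$ as an affine function of each single variable $q_i$ once the remaining variables $\textbf q_{-i}$ are held fixed. In (3) the summand $W_{-i}(\textbf q_{-i})$ does not involve $q_i$ at all, while
$W_i(q_i|\textbf q_{-i})=\sum_{A\in 2^N_i}q_i^A\,c_A(\textbf q_{-i})$ is a linear form in the $2^{n-1}$ coordinates $(q_i^A)_{A\in 2^N_i}$ of $q_i$, with coefficients $c_A(\textbf q_{-i})=\sum_{B\subseteq A\backslash i}\left(\prod_{j\in B}q^A_j\right)\mu^w(B\cup i)$ depending only on $\textbf q_{-i}$. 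The essential observation is that this linearity holds for \emph{every} $\textbf q_{-i}\in\Delta_{N\backslash i}$, whether its coordinates are extreme points or not.

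First I would record the one-variable fact that a linear form on a simplex attains both its maximum and its minimum at an extreme point. Concretely, choosing $A^{\uparrow}\in\arg\max_{A\in 2^N_i}c_A(\textbf q_{-i})$ and $A^{\downarrow}\in\arg\min_{A\in 2^N_i}c_A(\textbf q_{-i})$, let $e^{\uparrow},e^{\downarrow}\in ex(\Delta_i)$ be the vertices concentrating all mass on $A^{\uparrow}$ and $A^{\downarrow}$ respectively; then $W_i(e^{\downarrow}|\textbf q_{-i})\leq W_i(q_i|\textbf q_{-i})\leq W_i(e^{\uparrow}|\textbf q_{-i})$ for every $q_i\in\Delta_i$. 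Adding $W_{-i}(\textbf q_{-i})$ and invoking (3) gives $W(e^{\downarrow}|\textbf q_{-i})\leq W(q_i|\textbf q_{-i})\leq W(e^{\uparrow}|\textbf q_{-i})$; in particular this holds at $q_i$ being the $i$-th coordinate of $\textbf q$. (If $c_A(\textbf q_{-i})$ happens to be constant in $A$, any vertex of $\Delta_i$ serves.)

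Then I would round the coordinates one at a time. Set $\overline{\textbf q}^{(0)}=\textbf q$ and, for $k=1,\ldots,n$, obtain $\overline{\textbf q}^{(k)}$ from $\overline{\textbf q}^{(k-1)}$ by replacing its $k$-th coordinate with a vertex $\overline q_k\in ex(\Delta_k)$ maximizing $q_k\mapsto W(q_k|\overline{\textbf q}^{(k-1)}_{-k})$, exactly as in the previous step. Each such replacement does not decrease $W$ and leaves coordinates $1,\ldots,k-1$ untouched, so they remain extreme; hence after $n$ steps $\overline{\textbf q}:=\overline{\textbf q}^{(n)}$ satisfies $\overline q_i\in ex(\Delta_i)$ for every $i\in N$ and $W(\overline{\textbf q})\geq W(\textbf q)$. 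Running the identical induction with ``minimizing'' in place of ``maximizing'' produces $\underline{\textbf q}$ with $\underline q_i\in ex(\Delta_i)$ for all $i$ and $W(\underline{\textbf q})\leq W(\textbf q)$, which establishes (i) and (ii).

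There is no serious obstacle: the proposition is the standard principle that a function separately affine in each argument is extremized at the vertices of the product domain, and decomposition (3) is precisely that separate-affinity statement for $W$ on $\Delta_N$ (thereby expanding the classical fact that the MLE $f^w$ is extremized at vertices of $[0,1]^n$). The only point requiring a word of care is the legitimacy of the coordinate-by-coordinate rounding, namely that the affine dependence on $q_k$ used at step $k$ still holds after the earlier coordinates have been moved to vertices and the later ones have not been changed; this is guaranteed because (3) is valid verbatim for an arbitrary $\textbf q_{-k}\in\Delta_{N\backslash k}$.
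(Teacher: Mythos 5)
Your argument is correct and follows essentially the same route as the paper: it uses decomposition (3) to see that $W$ is linear in $q_i$ for fixed $\textbf q_{-i}$ (the paper writes this as the scalar product $W_i(q_i|\textbf q_{-i})=\langle q_i,w_{\textbf q_{-i}}\rangle$ with $w_{\textbf q_{-i}}$ from (4)), concentrates all of $i$'s mass on an arg-max (resp.\ arg-min) coordinate, and rounds the $n$ coordinates one at a time exactly as the paper's \textsc{RoundUp} procedure does. No gaps.
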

\begin{proof}
For $i\in N,\textbf q_{-i}\in\Delta_{N\backslash i}$, define
$w_{\textbf q_{-i}}:2^N_i\rightarrow\mathbb R$ by
\begin{equation}
w_{\textbf q_{-i}}(A)=\sum_{B\subseteq A\backslash i}\left(\prod_{j\in B}q^A_j\right)\mu^w(B\cup i)\text .
\end{equation}
Let $\mathbb A^+_{\textbf q_{-i}}=\arg\max w_{\textbf q_{-i}}$ and $\mathbb A^-_{\textbf q_{-i}}=\arg\min w_{\textbf q_{-i}}$,
with $\mathbb A^+_{\textbf q_{-i}}\neq\emptyset\neq\mathbb A^-_{\textbf q_{-i}}$ at all $\textbf q_{-i}$. Most importantly,
\begin{equation}
W_i(q_i|\textbf q_i)=\sum_{A\in 2^N_i}\Big(q^A_i\cdot w_{\textbf q_{-i}}(A)\Big)=\langle q_i,w_{\textbf q_{-i}}\rangle\text ,
\end{equation}
where $\langle\cdot ,\cdot\rangle$ denotes scalar product. Thus for given membership distributions of all $j\in N\backslash i$, global worth is affected by $i$'s
membership distribution through a scalar product. In order to maximize (or minimize) $W$ by suitably choosing $q_i$ for given $\textbf q_{-i}$, the whole of $i$'s membership mass
must be placed over $\mathbb A^+_{\textbf q_{-i}}$ (or $\mathbb A^-_{\textbf q_{-i}}$), anyhow. Hence there are precisely $|\mathbb A^+_{\textbf q_{-i}}|>0$
(or $|\mathbb A^-_{\textbf q_{-i}}|>0$) available extreme points of $\Delta_i$. The following procedure selects (arbitrarily) one of them.\smallskip

\textsc{RoundUp}$(w,\textbf q)$\smallskip

\textsl{Initialize:} Set $t=0$ and $\textbf q(0)=\textbf q$.\smallskip

\textsl{Loop:} While there is a $i\in N$ with $q_i(t)\not\in ex(\Delta_i)$,\smallskip

set $t=t+1$ and:
\begin{enumerate}
\item[(a)] select some $A^*\in\mathbb A^+_{\textbf q_{-i}(t)}$,
\item[(b)] define, for all $j\in N,A\in 2^N$,
\begin{equation*}
q^A_j(t)=
\left\{\begin{array}{c}q_j^A(t-1)\text{ if }j\neq i \\1\text{ if }j=i\text{ and } A=A^* \\0\text{ otherwise}\end{array}\right.\text .
\end{equation*}
\end{enumerate}

\textsl{Output:} Set $\overline{\textbf q}=\textbf q(t)$.\smallskip

Every change $q_i^A(t-1)\neq q_i^A(t)=1$ (for any $i\in N,A\in 2^N_i$) induces a non-decreasing variation
$W(\textbf q(t))-W(\textbf q(t-1))\geq 0$. Hence, the sought $\overline{\textbf q}$ is provided in at most $n$ iterations.
Analogously, replacing $\mathbb A^+_{\textbf q_{-i}}$ with $\mathbb A^-_{\textbf q_{-i}}$ yields the sought minimizer $\underline{\textbf q}$
(see also \cite[p. 163]{BorosHammer02}).
\end{proof}
\begin{remark}
For $i\in N,A\in 2^N_i$, if all $j\in A\backslash i\neq\emptyset$ satisfy $q_j^A=1$, then (4) yields $w_{\textbf q_{-i}}(A)=w(A)-w(A\backslash i)$, while 
$w_{\textbf q_{-i}}(\{i\})=w(\{i\})$ regardless of $\textbf q_{-i}$.
\end{remark}
\begin{corollary}
Some partition $P$ satisfies $W(\textbf p)\geq W(\textbf q)$ for all $\textbf q\in\Delta_N$, with $W(\textbf p)=\sum_{A\in P}w(A)$.
\end{corollary}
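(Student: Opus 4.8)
\emph{Proof proposal.} The plan is to obtain the statement from the preceding proposition (the \textsc{RoundUp} construction) together with the additive form $W(\textbf q)=\sum_{A\in 2^N}f^w(q^A)$ of the objective. First I would invoke compactness: $W$ is a polynomial, hence continuous on the compact set $\Delta_N$, so it attains a global maximum at some $\textbf q^*\in\Delta_N$. Applying that proposition to $\textbf q^*$ yields $\overline{\textbf q}\in\Delta_N$ with $W(\overline{\textbf q})\geq W(\textbf q^*)$ and $\overline q_i\in ex(\Delta_i)$ for every $i\in N$; by maximality of $\textbf q^*$ the point $\overline{\textbf q}$ is itself a global maximizer.

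Next I would extract the combinatorial content of $\overline{\textbf q}$. Since $\overline q_i\in ex(\Delta_i)$, each $i\in N$ places its whole unit membership on a single subset $A_i\in 2^N_i$ (so $i\in A_i$), which means $\overline q^A=\chi_{C_A}$ where $C_A=\{i\in N:A_i=A\}\subseteq A$. I would then check that the nonempty sets among $\{C_A:A\in 2^N\}$ are pairwise disjoint (two of them sharing an element $j$ would force $A=A_j=A'$) and cover $N$ (each $i$ lies in $C_{A_i}$), hence form a partition $P$ of $N$ in which each block equals $C_A$ for exactly one $A$. Using $f^w(\chi_C)=w(C)$ and $w(\emptyset)=0$ then gives $W(\overline{\textbf q})=\sum_{A\in 2^N}f^w(\overline q^A)=\sum_{A\in 2^N}w(C_A)=\sum_{A\in P}w(A)$. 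Finally, the fuzzy partition $\textbf p$ associated with $P$ has $p^A=\chi_A$ for $A\in P$ and $p^A=0$ otherwise, so $W(\textbf p)=\sum_{A\in P}w(A)=W(\overline{\textbf q})=\max_{\textbf q\in\Delta_N}W(\textbf q)$, which is the claim.

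The one delicate point is the passage from ``all coordinates extreme'' to ``genuine partition'': the output of \textsc{RoundUp} records, for each element, a chosen superset containing it, but these choices may overlap inconsistently (e.g.\ $1$ picks $\{1,2\}$ while $2$ picks $\{2,3\}$), so $\overline{\textbf q}$ need not itself be a fuzzy partition. The resolution, and the crux of the argument, is that this inconsistency costs nothing: since $\overline q^A$ is the characteristic vector of the set $C_A$ of elements that actually selected $A$, the worth $W(\overline{\textbf q})$ collapses onto $\sum_{\beta\in P}w(\beta)$, the worth of the coarsening $P$ obtained by grouping elements that made the same choice. I expect making this collapse precise --- in particular verifying that every block of $P$ is counted exactly once in $\sum_{A\in 2^N}w(C_A)$ while all remaining terms are $w(\emptyset)=0$ --- to be the main (and essentially routine) obstacle; the rest is a direct application of the results already established.
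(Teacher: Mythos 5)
Your proposal is correct and follows essentially the same route as the paper, which simply cites Propositions 4 and 6: use the \textsc{RoundUp} proposition to pass to a maximizer with all coordinates in $ex(\Delta_i)$, then read off a partition of equal worth. Your explicit construction of the blocks $C_A=\{i:A_i=A\}$ and the verification that $W(\overline{\textbf q})=\sum_{A}w(C_A)=\sum_{C\in P}w(C)$ is exactly the detail the paper delegates (somewhat loosely) to Proposition 4, and it correctly handles the point that the rounded-up $\overline{\textbf q}$ need not itself be exact.
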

\begin{proof}
Follows from propositions 4 and 6 (with the above notation associating $\textbf p\in\Delta_N$ to partition $P$).  
\end{proof}

Defining global maximizers is clearly immediate.
\begin{definition}
Fuzzy partition $\hat{\textbf q}\in\Delta_N$ is a global maximizer if $W(\hat{\textbf q})\geq W(\textbf q)$ for all $\textbf q\in\Delta_N$. 
\end{definition}
Concerning local maximizers, consider a vector $\omega=(\omega_1,\ldots ,\omega_n)\in\mathbb R^n_{++}$ of strictly positive weights, with $\omega_N=\sum_{j\in N}\omega_j$,
and focus on the (Nash) equilibrium \cite{Micro} of the game with elements $i\in N$ as players, each strategically choosing its membership
distribution $q_i\in\Delta_i$ while being rewarded with fraction $\frac{\omega_i}{\omega_N}W(q_1,\ldots ,q_n)$ of the global worth attained at any
$(q_1,\ldots ,q_n)=\textbf q\in\Delta_N$.
\begin{definition}
Fuzzy partition $\hat{\textbf q}\in\Delta_N$ is a local maximizer if for all $q_i\in\Delta_i$
and all $i\in N$ inequality $W_i(\hat q_i|\hat{\textbf q}_{-i})\geq W_i(q_i|\hat{\textbf q}_{-i})$ holds (see (3)).
\end{definition}
This definition of local maximizer entails that the \textit{neighborhood} $\mathcal N(\textbf q)\subset\Delta_N$ of any $\textbf q\in\Delta_N$ is
\begin{equation*}
\mathcal N(\textbf q)=\underset{i\in N}{\bigcup}\Big\{\tilde{\textbf q}:\tilde{\textbf q}=\tilde q_i|\textbf q_{-i},\tilde q_i\in\Delta_i\Big\}\text .
\end{equation*}
\begin{definition}
The $(i,A)$-derivative of $W$ at $\textbf q\in\Delta_N$ is
\begin{equation*}
\partial W(\textbf q)/\partial q^A_i=W(\overline{\textbf q}(i,A))-W(\underline{\textbf q}(i,A))=
\end{equation*}
\begin{equation*}
=W_i\Big(\overline q_i(i,A)|\overline{\textbf q}_{-i}(i,A)\Big)-W_i\Big(\underline q_i(i,A)|\underline{\textbf q}_{-i}(i,A)\Big)\text ,
\end{equation*}
with $\overline{\textbf q}(i,A)=\Big(\overline q_1(i,A),\ldots ,\overline q_n(i,A)\Big)$ given by
\begin{equation*}
\overline q_j^B(i,A)=\left\{\begin{array}{c} q_j^B\text{ for all }j\in N\backslash i,B\in 2^N_j \\ 1\text{ for }j=i,B=A\\0\text{ for }j=i,B\neq A \end{array}\right.
\text ,
\end{equation*}
and $\underline{\textbf q}(i,A)=\Big(\underline q_1(i,A),\ldots ,\underline q_n(i,A)\Big)$ given by
\begin{equation*}
\underline q_j^B(i,A)=\left\{\begin{array}{c} q_j^B\text{ for all }j\in N\backslash i,B\in 2^N_j \\ 0\text{ for }j=i\text{ and all }B\in 2^N_i \end{array}\right.
\text ,
\end{equation*}
thus $\nabla W(\textbf q)=\{\partial W(\textbf q)/\partial q^A_i:i\in N,A\in 2^N_i\}\in\mathbb R^{n2^{n-1}}$ is the (full) gradient of $W$ at $\textbf q$.
The $i$-gradient $\nabla_iW(\textbf q)\in\mathbb R^{2^{n-1}}$ of $W$ at $\textbf q=q_i|\textbf q_{-i}$ is set function $\nabla_iW(\textbf q):2^N_i\rightarrow\mathbb R$ defined by
$\nabla_iW(\textbf q)(A)=w_{\textbf q_{-i}}(A)$ for all $A\in 2^N_i$, where $w_{\textbf q_{-i}}$ is given by (4).
\end{definition}
\begin{remark}
Membership distribution $\underline q_i(i,A)$ is the null one: its $2^{n-1}$ entries are all 0, hence $\underline q_i(i,A)\not\in\Delta_i$.
\end{remark}
The setting obtained thus far allows to conceive searching for a local maximizer partition $\textbf p^*$ from given fuzzy partition \textbf q as initial
candidate solution, and while maintaing the whole search within the continuum of fuzzy partitions. This idea may be specified in alternative ways
yielding different local search methods. One possibility is the following.\smallskip

\textsc{LocalSearch}$(w,\textbf q)$\smallskip

\textsl{Initialize:} Set $t=0$ and $\textbf q(0)=\textbf q$, with requirement $|\{i:q_i^A>0\}|\in\{0,|A|\}$ for all $A\in 2^N$.\smallskip

\textsl{Loop 1:} While $0<\sum_{i\in A}q^A_i(t)<|A|$ for a $A\in 2^N$, set $t=t+1$ and
\begin{enumerate}
\item[(a)] select a $A^*(t)\in 2^N$ such that
\begin{equation*}
\sum_{i\in A^*(t)}w_{\textbf q_{-i}(t-1)}(A^*(t))\geq\sum_{j\in B}w_{\textbf q_{-j}(t-1)}(B)
\end{equation*}
for all $B\in 2^N$ such that $0<\sum_{i\in B}q^B_j(t)<|B|$,
\item[(b)] for $i\in A^*(t)$ and $A\in 2^N_i$, define
\begin{equation*}
q_i^A(t)=
\left\{\begin{array}{c}1\text{ if }A=A^*(t)\text ,\\
0\text{ if }A\neq A^*(t)\text ,\end{array}\right.
\end{equation*}
\item[(c)] for $j\in N\backslash A^*(t)$ and $A\in 2^N_j$ with $A\cap A^*(t)=\emptyset$, define $q^A_j(t)=q_j^A(t-1)+$
\begin{equation*}
+\left(w(A)\sum_{\underset{B\cap A^*(t)\neq\emptyset}{B\in 2^N_j}}q_j^B(t-1)\right)
\left(\sum_{\underset{B'\cap A^*(t)=\emptyset}{B'\in 2^N_j}}w(B')\right)^{-1}
\end{equation*}
\item[(d)] for $j\in N\backslash A^*(t)$ and $A\in 2^N_j$ with $A\cap A^*(t)\neq\emptyset$, define
\begin{equation*}
q^A_j(t)=0\text .
\end{equation*}
\end{enumerate}
\textsl{Loop 2:} While $q_i^A(t)=1,|A|>1$ for a $i\in N$ and $w(A)<w(\{i\})+w(A\backslash i)$, set $t=t+1$ and define:
\begin{eqnarray*}
q^{\hat A}_i(t)&=&\left\{\begin{array}{c}1\text{ if }|\hat A|=1\\
0\text{ otherwise}\end{array}\right .\text{ for all }\hat A\in 2^N_i\text ,\\
q^{B}_j(t)&=&\left\{\begin{array}{c}1\text{ if }B=A\backslash i\\
0\text{ otherwise}\end{array}\right .\text{ for all }j\in A\backslash i,B\in 2^N_j\text ,\\
q^{\hat B}_{j'}(t)&=&q^{\hat B}_{j'}(t-1)\text{ for all }j'\in A^c,\hat B\in 2^N_{j'}\text .
\end{eqnarray*}
\textsl{Output:} Set $\textbf q^*=\textbf q(t)$.\smallskip

Both \textsc{RoundUp} and \textsc{LocalSearch} yield a sequence
$\textbf q(0),\ldots ,\textbf q(t^*)=\textbf q^*$ where $q_i^*\in ex(\Delta_i)$ for all $i\in N$. In the former at the end of each iteration $t$ the novel
$\textbf q(t)\in\mathcal N(\textbf q(t-1))$ is in the neighborhood of its predecessor. In the latter $\textbf q(t)\not\in\mathcal N(\textbf q(t-1))$ in general, as
in $|P|\leq n$ iterations of \textsl{Loop 1} a partition $\{A^*(1),\ldots ,A^*(|P|)\}=P$ is generated. Selected subsets or blocks $A^*(t)\in 2^N$, $t=1,\ldots ,|P|$ are any of those where
the sum over members $i\in A^*(t)$ of $(i,A^*(t))$-derivatives $\partial W(\textbf q(t-1))/\partial q^{A^*(t)}_i(t-1)$ is maximal. Once a block $A^*(t)$ is selected,
then lines (c) and (d) make all elements $j\in N\backslash A^*(t)$ redistribute the entire membership mass currently placed on subsets $A'\in 2^N_j$ with non-empty intersection
$A'\cap A^*(t)\neq\emptyset$ over those remaining $A\in 2^N_j$ such that, conversely, $A\cap A^*(t)=\emptyset$. The redistribution is such that each of these latter gets a fraction
$w(A)/\sum_{B\in 2^N_j:B\cap A^*(t)=\emptyset}w(B)$ of the newly freed membership mass $\sum_{A'\in 2^N_j:A'\cap A^*(t)\neq\emptyset}q_j^{A'}(t-1)$.  
The subsequent \textsl{Loop 2} checks whether the partition generated by \textsl{Loop 1} may be improved by exctracting some elements from existing blocks and putting them in
singleton blocks of the final output. In the limit, set function $w$ may be such that for some element $i\in N$ global worth decreases when the element joins any
subset $A\in 2^N_i,|A|>1$, that is to say
$w(A)-w(A\backslash i)-w(\{i\})=\sum_{B\in 2^A\backslash 2^{A\backslash i}:|B|>1}\mu^w(B)<0$.
\begin{proposition}
\textsc{LocalSearch}$(W,\textbf q)$ outputs a local maximizer $\textbf q^*$.
\end{proposition}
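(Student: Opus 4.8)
The plan is to show that the output $\textbf q^*=\textbf q(t^*)$ of \textsc{LocalSearch}$(w,\textbf q)$ satisfies the defining inequality of Definition 13, namely $W_i(q^*_i|\textbf q^*_{-i})\geq W_i(q_i|\textbf q^*_{-i})$ for every $i\in N$ and every $q_i\in\Delta_i$. By (5) and (6) in the proof of Proposition 6, $W_i(q_i|\textbf q^*_{-i})=\langle q_i,w_{\textbf q^*_{-i}}\rangle$ is linear in $q_i$, so the inequality holds for all $q_i\in\Delta_i$ precisely when the extreme point of $\Delta_i$ on which $q^*_i$ is concentrated lies in $\mathbb A^+_{\textbf q^*_{-i}}=\arg\max w_{\textbf q^*_{-i}}$. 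Since \textsc{LocalSearch} terminates only when the \textsl{Loop 1} and \textsl{Loop 2} exit conditions both fail, and on termination $\textbf q^*$ is a (non-fuzzy) partition $P$ by the discussion preceding the statement, the task reduces to verifying that for each $i\in N$, writing $A_i\in P$ for the unique block with $i\in A_i$, we have $A_i\in\arg\max_{A\in 2^N_i}w_{\textbf q^*_{-i}}(A)$.

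First I would record the key simplification afforded by Remark 7: since $\textbf q^*$ is a partition, for any $i$ and any $A\in 2^N_i$ the value $w_{\textbf q^*_{-i}}(A)$ depends only on whether the other members of $A$ are ``aligned'' with $q^*_{-i}$. Concretely, $w_{\textbf q^*_{-i}}(A)=w(A)-w(A\backslash i)$ when every $j\in A\backslash i$ has $q^{*A}_j=1$ (i.e.\ when $A\backslash i$ is contained in a block of $P$ together with $\ldots$ — more precisely when $A=A_i$, since the partition puts mass $1$ on $A$ for $j$ only if $A$ is $j$'s block), and $w_{\textbf q^*_{-i}}(A)=w(\{i\})$ when $A=\{i\}$. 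For a general $A\in 2^N_i$ with $|A|>1$ that is not $A_i$, some coordinate $q^{*A}_j$ vanishes, and expanding (4) shows $w_{\textbf q^*_{-i}}(A)$ collapses to $w_{\textbf q^*_{-i}}(A\cap A_i)$-type contributions; the upshot is that the only candidates competing with $A_i$ for the maximum are $A_i$ itself and $\{i\}$, with values $w(A_i)-w(A_i\backslash i)$ and $w(\{i\})$ respectively (and $0$ for subsets fully misaligned). Thus $A_i\in\arg\max w_{\textbf q^*_{-i}}$ is equivalent to $w(A_i)-w(A_i\backslash i)\geq w(\{i\})$, i.e.\ $w(A_i)\geq w(\{i\})+w(A_i\backslash i)$.

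Next I would argue that this inequality is exactly what the termination of \textsl{Loop 2} guarantees. \textsl{Loop 2} runs precisely while there exists $i\in N$ with $q_i^A(t)=1$, $|A|>1$, and $w(A)<w(\{i\})+w(A\backslash i)$; on exit no such $i$ exists, so every block $A_i$ of the final partition with $|A_i|>1$ satisfies $w(A_i)\geq w(\{i\})+w(A_i\backslash i)$ for all $i\in A_i$, and for singleton blocks $A_i=\{i\}$ the claim is trivial since $w_{\textbf q^*_{-i}}(\{i\})=w(\{i\})$ is then already the relevant value. Combining with the previous paragraph, for every $i\in N$ the extreme point carrying $q^*_i$ maximizes $w_{\textbf q^*_{-i}}$, hence $W_i(q^*_i|\textbf q^*_{-i})\geq W_i(q_i|\textbf q^*_{-i})$ for all $q_i\in\Delta_i$, which is Definition 13. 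It remains only to note that the algorithm is well defined and halts: \textsl{Loop 1} performs at most $n$ iterations (each iteration fixes all coordinates of at least one previously-unfixed element to $0/1$ via lines (b)-(d), and the redistribution in (c) keeps each $q_j$ in $\Delta_j$ because the freed mass is split in proportions summing to $1$), producing a genuine partition; and \textsl{Loop 2} strictly decreases the number of non-singleton blocks (or, more carefully, strictly increases $W(\textbf q(t))$ by the amount $w(\{i\})+w(A\backslash i)-w(A)>0$ at each step, since \textsl{Loop 2}'s reassignment replaces the term $w(A)$ by $w(\{i\})+w(A\backslash i)$ in $W(\textbf p)=\sum_{A\in P}w(A)$ and leaves all other blocks untouched), so it too terminates.

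The main obstacle I anticipate is the bookkeeping in the second paragraph: one must verify rigorously that after \textsl{Loop 1} the object $\textbf q^*$ really is a partition — in particular that lines (c)-(d) never create a fuzzy (strictly interior) coordinate and never leave an element with total mass different from $1$ — and that for a partition the function $w_{\textbf q^*_{-i}}$ on $2^N_i$ attains its maximum either at $i$'s own block or at $\{i\}$, with the intermediate subsets contributing no larger value. This requires carefully unwinding the definition (4) of $w_{\textbf q_{-i}}$ together with the Möbius-sum identity $w(A)-w(A\backslash i)=\sum_{B\in 2^A\backslash 2^{A\backslash i}}\mu^w(B)$ and the observation (Remark 7) that a single vanishing coordinate $q^{*A}_j=0$ annihilates every monomial in (4) indexed by a $B\ni j$. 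Once that structural fact is in hand, the equivalence with the \textsl{Loop 2} exit condition, and hence with Definition 13, is immediate.
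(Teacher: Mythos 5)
Your proposal is correct and follows essentially the same route as the paper: both reduce, via the vanishing of the products $\prod_{j\in B\backslash i}p_j^B$ for misaligned subsets, to comparing $w(A_i)-w(A_i\backslash i)$ against $w(\{i\})$ and then invoke the exit condition of \textsl{Loop 2} (the paper phrases this as a direct computation of $W(\textbf p)-W(q_i|\textbf p_{-i})=(p_i^A-q_i^A)\sum_{B\in 2^{A}\backslash 2^{A\backslash i}:|B|>1}\mu^w(B)$ plus contraposition, while you use the scalar-product linearity of $W_i$ and an argmax condition, which is the same computation repackaged). One small inaccuracy: for $A\in 2^N_i$ with $A\neq A_i$ the value of $w_{\textbf p_{-i}}(A)$ is $w(\{i\})$ (the $B=\emptyset$ term survives), not $0$, but since this coincides with the value at $\{i\}$ it does not affect the comparison or the conclusion.
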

\begin{proof}
It is plain that the output is a partition $P$ or, with the notation of corollary 8 above, $\textbf q^*=\textbf p$. Accordingly, any element $i\in N$ is
either in a singleton block $\{i\}\in P$ or else in a block $A\in P,i\in A$ such that $|A|>1$. In the former case, any membership reallocation deviating from $p_i^{\{i\}}=1$,
given memberships $p_j,j\in N\backslash i$, yields a cover (fuzzy or not) where global worth is the same as at $\textbf p$, because $\prod_{j\in B\backslash i}p_j^B=0$ for all
$B\in 2_i^N\backslash A$ (see example 2 above). In the latter case, any membership reallocation $q_i$ deviating from $p_i^A=1$ (given memberhips $p_j,j\in N\backslash i$) yields
a cover which is best seen by distinguishing between $2_i^N\backslash A$ and $A$. Also recall that $w(A)-w(A\backslash i)=\sum_{B\in 2^A\backslash 2^{A\backslash i}}\mu^w(B)$.
Again, all membership mass $\sum_{B\in 2_i^N\backslash A}q_i^B>0$ simply collapses on singleton $\{i\}$ because $\prod_{j\in B\backslash i}p_j^B=0$ for all $B\in 2_i^N\backslash A$.
Hence $W(\textbf p)-W(q_i|\textbf p_{-i})=$
\begin{equation*}
=w(A)-w(\{i\})-\left(q_i^A\sum_{B\in 2^A\backslash 2^{A\backslash i}:|B|>1}\mu^w(B)+\sum_{B'\in 2^{A\backslash i}}\mu^w(B')\right)=
\end{equation*}
\begin{equation*}
=\left(p_i^A-q_i^A\right)\sum_{B\in 2^A\backslash 2^{A\backslash i}:|B|>1}\mu^w(B)\text .
\end{equation*}
Now assume that \textbf q is \textit{not} a local maximizer, i.e. $W(\textbf p)-W(q_i|\textbf p_{-i})<0$. Since $p_i^A-q_i^A>0$ (because $p_i^A=1$ and $q_i\in\Delta_i$ is a deviation
from $p_i$), then
\begin{equation*}
\sum_{B\in 2^A\backslash 2^{A\backslash i}:|B|>1}\mu^w(B)=w(A)-w(A\backslash i)-w(\{i\})<0\text .
\end{equation*}
Hence \textbf q cannot be the output of \textsl{Second Loop}.
\end{proof}

In local search methods, the chosen initial canditate solution determines what neighborhoods shall be visited. The range of the objective function in a neighborhood is a set of
real values. In a neighborhood $\mathcal N(\textbf p)$ of a $\textbf p\in\Delta_N$ or partition $P$ only those $\sum_{A\in P:|A|>1}|A|$ elements $i\in A$ in non-sigleton
blocks $A\in P$, $|A|>1$ can modify global worth by reallocating their membership. In view of (the proof of) proposition 13, the only admissible variations obtain by deviating from
$p_i^A=1$ with an alternative membership distribution $q_i^A\in[0,1)$, with $W(q_i|\textbf p_{-i})-W(\textbf p)$ equal to
$(q_i^A-1)\sum_{B\in 2^A\backslash 2^{A\backslash i},|B|>1}\mu^w(B)+(1-q_i^A)w(\{i\})$. Hence, choosing partitions as initial candidate
solutions of \textsc{LocalSearch} is evidently poor. A sensible choice should conversely allow the search to explore different neighborhoods where the objective function may range
widely. A simplest example of such an initial candidate solution is $q_i^A=2^{1-n}$ for all $A\in 2^N_i$ and all $i\in N$, i.e. the  uniform distribution.
On the other hand, the input of local search algorithms
is commonly desired to be close to a global optimum, i.e. a maximizer in the present setting. This translates here into the idea of defining the input by means of set
function $w$. In this view, consider $q_i^A=w(A)/\sum_{B\in 2^N_i}w(B)$, yielding
$\frac{q_i^A}{q_i^B}=\frac{w(A)}{w(B)}=\frac{q^A_j}{q^B_j}$ for all $A,B\in 2^N_i\cap 2^N_j$ and all $i,j\in N$ (see lines (c), (d)).

With a suitable initial candidate solution, the search may be restricted to explore only a maximum number of fuzzy partitions, thereby containing the computational burden.
In particular, if $\textbf q(0)$ is the finest partition $\{\{1\},\ldots ,\{n\}\}$ or $q_i^{\{i\}}(0)=1$ for all $i\in N$, then the
search does not explore any neighborhood at all, and such an input coincides with the output. More reasonably, let $\mathbb A_{\textbf q}^{\max}=\{A_1,\ldots ,A_k\}$ denote the
collection of $\supseteq$-maximal subsets where input memberships are strictly positive. That is, $q_i^{A_{k'}}>0$ for all $i\in A_{k'},1\leq k'\leq k$ as well as $q_j^B=0$ for all
$B\in 2^N\backslash\left(2^{A_1}\cup\cdots\cup2^{A_k}\right)$ and all $j\in B$. Then, the output shall be a partition $P$ each of whose blocks $A\in P$ satisfies $A\subseteq A_{k'}$
for some $1\leq k'\leq k$. Hence, by suitably choosing the input \textbf q, \textsc{LocalSearch} outputs a partition with no less than
some maximum desired number $k(\textbf q)$ blocks.

\section{Lower-dimensional case}
If $\mathcal F\subset 2^N$, then $2^N_i\cap\mathcal F\neq\emptyset$ for every $i\in N$, otherwise the problem reduces to packing the proper subset
$N\backslash\{i:\mathcal F\cap 2^N_i=\emptyset\}$ of elements contained in at least one feasible subset. As outlined in section 1, without additional notation simply let
$\{\emptyset\}\in\mathcal F\ni\{i\}$ for all $i\in N$ with null weigths $w(\emptyset)=0=w(\{i\})$ if $\{i\}\notin\mathcal F$. Thus $(\mathcal F,\supseteq)$ is a poset (partially
ordered set) with bottom element $\emptyset$, and poset function $w:\mathcal F\rightarrow \mathbb R_+$ has its M\"obius inversion $\mu^w:\mathcal F\rightarrow \mathbb R$ \cite{Rota64}.
Memberships $q_i$ distribute over  $\mathcal F_i=2^N_i\cap\mathcal F=\{A_1,\ldots ,A_{|\mathcal F_i|}\}$, with lower($|\mathcal F_i|$)-dimensional unit simplices
\begin{equation*}
\bar\Delta_i=\left\{\left(q_i^{A_1},\ldots ,q_i^{A_{|\mathcal F_i|}}\right)\in\mathbb R^{|\mathcal F_i|}_+: \sum_{1\leq k\leq A_{|\mathcal F_i|}}q_i^{A_k}=1\right\}
\end{equation*}
and corresponding fuzzy covers $\textbf q\in\bar\Delta_N=\times_{1\leq i\leq n}$ $\bar\Delta_i$. Note that a fuzzy cover now may maximally consist of $|\mathcal F|-1$ points in the unit
$n$-dimensional hypercube $[0,1]^n$. Hypercube $[0,1]^n$ is replaced with $\mathcal C(\mathcal F)=co(\{\chi_A:A\in\mathcal F\})\subseteq[0,1]^n$, i.e. the convex hull
of feasible characteristic functions, regarded as $n$-vectors \cite{Branko2001}. Recursively (with $w(\emptyset)=0$), M\"obius inversion $\mu^w:\mathcal F\rightarrow\mathbb R$ is
\begin{equation*}
\mu^w(A)=w(A)-\sum_{B\in\mathcal F:B\subset A}\mu^w(B)\text ,
\end{equation*}
while the MLE $f^w:\mathcal C(\mathcal F)\rightarrow\mathbb R$ of $w$ is
\begin{equation*}
f^w(q^A)=\sum_{B\in\mathcal F\cap 2^A}\left(\prod_{i\in B}q_i^A\right)\mu^w(B)\text . 
\end{equation*}
Therefore, every fuzzy cover $\textbf q\in\bar\Delta_N$ has global worth
\begin{equation*}
W(\textbf q)=\sum_{A\in\mathcal F}\sum_{B\in\mathcal F\cap 2^A}\left(\prod_{i\in B}q_i^A\right)\mu^w(B)\text .
\end{equation*}
For all $i\in N,q_i\in\bar\Delta_i$, and $\textbf q_{-i}\in\bar\Delta_{N\backslash i}=\underset{j\in N\backslash i}{\times}\bar\Delta_j$
\begin{equation*}
W_{-i}(\textbf q_{-i})=\sum_{A\in\mathcal F_i}\left[\sum_{B\in\mathcal F\cap 2^{A\backslash i}}\left(\prod_{j\in B}q^A_j\right)\mu^w(B)\right]+
\end{equation*}
\begin{equation*}
+\sum_{A'\in\mathcal F\backslash\mathcal F_i}\left[\sum_{B'\in\mathcal F\cap 2^{A'}}\left(\prod_{j'\in B'}q^{A'}_{j'}\right)\mu^w(B')\right]\text ,
\end{equation*}
\begin{equation*}
W_i(q_i|\textbf q_{-i})=\sum_{A\in\mathcal F_i}q^A_i\left[\sum_{B\in\mathcal F_i\cap 2^A}\left(\prod_{j\in B\backslash i}q^A_j\right)\mu^w(B)\right]\text ,
\end{equation*}
yielding again
\begin{equation}
W(\textbf q)=W_i(q_i|\textbf q_{-i})+W_{-i}(\textbf q_{-i})\text .
\end{equation}
From (4) above, $w_{\textbf q_{-i}}:\mathcal F_i\rightarrow\mathbb R$ now is
\begin{equation}
w_{\textbf q_{-i}}(A)=\sum_{B\in\mathcal F_i\cap 2^A}\left(\prod_{j\in B\backslash i}q^A_j\right)\mu^w(B)
\end{equation}
for all $i\in N$, all $A\in\mathcal F_i$ and all $\textbf q_{-i}\in\bar\Delta_{N\backslash i}$. 

For each $i\in N$, denote by $ex(\bar\Delta_i)$ the set of $|\mathcal F_i|$ extreme points of simplex $\bar\Delta_i$. Like in the full-dimensional case, at any fuzzy cover
$\hat{\textbf q}\in\bar\Delta_N$ every $i\in N$ such that $\hat q_i\not\in ex(\bar\Delta_i)$ may deviate by concentrating its whole membership on some $A\in\mathcal F_i$
such that $w_{\hat{\textbf q}_{-i}}(A)\geq w_{\hat{\textbf q}_{-i}}(B)$ for all $B\in\mathcal F_i$. This yields a non-decreasing variation
$W(q_i|\hat{\textbf q}_{-i})\geq W(\hat{\textbf q})$ in global worth, with $q_i\in ex(\bar\Delta_i)$. When all $n$ elements do so, one after the other while updating
$w_{\textbf q_{-i}(t)}$ as in \textsc{RoundUp} above, i.e. $t=0,1,\ldots$, then eventually $\textbf q=(q_1,\ldots ,q_n)$ is such that
$\textbf q\in\underset{i\in N}{\times}ex(\bar\Delta_i)$. Yet cases $\mathcal F\subset 2^N$ and $\mathcal F=2^N$ are different in terms of exactness. Specifically, consider any
$\emptyset\neq A\in\mathcal F$ with $|\{i:q_i^A=1\}|\not\in\{0,|A|\}$ or $A_{\textbf q}^+=\{i:q_i^A=1\}\subset A$, with
$f^w(q^A)=\sum_{B\in\mathcal F\cap 2^{A_{\textbf q}^+}}\mu^w(B)$. Then, $\mathcal F\cap 2^{A_{\textbf q}^+}$ is likely to admit no shrinking (see above) yielding
an exact fuzzy cover with same global worth as (non-exact) $\textbf q$.
\begin{proposition}
The values taken on exact fuzzy covers do not saturate the range of $W:\bar\Delta_N\rightarrow\mathbb R_+$.
\end{proposition}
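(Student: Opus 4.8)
The plan is to prove the proposition by a counterexample: I would exhibit a ground set $N$, a family $\mathcal F\subset 2^N$ and weights $w$ for which the maximum of $W$ over $\bar\Delta_N$ is attained but realised by no exact fuzzy cover. Since $W$ is a polynomial on the compact connected domain $\bar\Delta_N$ its range is an interval, so one unattained value (I will take the maximum) is enough. Concretely I would set $N=\{1,2,3,4\}$, $\mathcal F=\{\emptyset,\{1\},\{2\},\{3\},\{4\},\{1,2\},\{2,3\},N\}$, $w(\{1,2\})=w(\{2,3\})=1$ and $w(A)=0$ for every other $A\in\mathcal F$; the recursion for M\"obius inversion then gives $\mu^w(\{1,2\})=\mu^w(\{2,3\})=1$, $\mu^w(N)=-2$, and $\mu^w(A)=0$ otherwise, and I would record $\mathcal F_2=\{\{2\},\{1,2\},\{2,3\},N\}$. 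At the vertex cover $\textbf q^\ast$ with $q_1^N=q_2^N=q_3^N=1$, $q_4^{\{4\}}=1$ and all remaining coordinates $0$, the degree-four monomial of $f^w(q^{\ast N})$ vanishes because $q_4^{\ast N}=0$, so $f^w(q^{\ast N})=q_1^{\ast N}q_2^{\ast N}\mu^w(\{1,2\})+q_2^{\ast N}q_3^{\ast N}\mu^w(\{2,3\})=2$ while every other summand of $W(\textbf q^\ast)$ is $0$; thus $W(\textbf q^\ast)=2$.

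Next I would verify $\max_{\bar\Delta_N}W=2$. By the linearity of $W$ in each $q_i$ (so that, as argued just before this proposition, the supremum is attained at a vertex of $\bar\Delta_N$), a maximiser corresponds to a partition $\{S_l\}$ of $N$ with every block $S_l$ contained in some member of $\mathcal F$, with worth $\sum_l\hat w(S_l)$ where $\hat w(S):=\sum_{C\in\mathcal F\cap 2^S}\mu^w(C)$. One tabulates $\hat w(\{1,2,3\})=2$, $\hat w(N)=0$, $\hat w(\{1,2\})=\hat w(\{2,3\})=1$ and every remaining value $\le1$; since any block with $\hat w>0$ must contain a subset of positive M\"obius value and hence contains the element $2$, at most one such block can occur in a partition, so $\sum_l\hat w(S_l)\le2$. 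A short factorisation $f^w(q^N)=q_2^N\bigl(q_1^N(1-q_3^Nq_4^N)+q_3^N(1-q_1^Nq_4^N)\bigr)\ge0$ also gives $W\ge0$ on $\bar\Delta_N$, so the range of $W$ is exactly $[0,2]$.

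The heart of the argument is to show $W(\textbf q)<2$ for every exact fuzzy cover $\textbf q$. Recalling (Proposition 4) that an exact cover satisfies $|\{i\in A:q_i^A>0\}|\in\{0,|A|\}$ for each $A\in\mathcal F$, we have either $q_i^N=0$ for all $i\in N$ or $q_i^N>0$ for all $i\in N$. In the first case $W(\textbf q)=q_1^{\{1,2\}}q_2^{\{1,2\}}+q_2^{\{2,3\}}q_3^{\{2,3\}}\le q_2^{\{1,2\}}+q_2^{\{2,3\}}\le1$ by the simplex constraint $q_2^{\{2\}}+q_2^{\{1,2\}}+q_2^{\{2,3\}}+q_2^N=1$. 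In the second case $q_4^N>0$, so $f^w(q^N)\le q_1^Nq_2^N+q_2^Nq_3^N\le 2q_2^N$, whence $W(\textbf q)\le 2q_2^N+q_2^{\{1,2\}}+q_2^{\{2,3\}}\le2\bigl(q_2^N+q_2^{\{1,2\}}+q_2^{\{2,3\}}\bigr)\le2$; chasing the equality conditions forces $q_2^N=1$, then $q_1^N=q_3^N=1$ and $q_4^N=0$, contradicting $q_4^N>0$. Hence $2\in W(\bar\Delta_N)$ is attained by no exact fuzzy cover, which proves the proposition; the same instance shows in particular that, in contrast with Corollary 8, no partition of $N$ into members of $\mathcal F$ is a maximiser of $W$.

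The step I expect to be the real obstacle is this last one. Beyond the bookkeeping, the delicate point is that in the lower-dimensional setting exactness is no longer literally the all-or-nothing property — the cover $\textbf q^\ast$ is extremal along every $\bar\Delta_i$ yet has $\{i:q_i^N>0\}$ a proper nonempty subset of $N$ — so one must be explicit about which covers the statement excludes and why the worth $2$ escapes them. The reason it does is structural: realising $2$ forces the pivotal element $2$ to contribute at full strength to both $\{1,2\}$ and $\{2,3\}$, which its membership simplex $\bar\Delta_2$ forbids, and the only way around this, concentrating element $2$'s mass on $N$, switches on the negative coefficient $\mu^w(N)$ and destroys the gain.
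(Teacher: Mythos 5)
Your proof is correct and takes essentially the same route as the paper's: both are proofs by explicit counterexample on $N=\{1,2,3,4\}$ built around a large feasible set ($N$ itself) carrying negative M\"obius mass $\mu^w(N)<0$, which a cover with \emph{partial} membership on $N$ switches off (by giving one element zero membership there) while simultaneously harvesting the positive M\"obius values of several pairwise-intersecting two-element subsets --- something no all-or-nothing cover can do. The paper's instance is $\mathcal F=\{N,\{4\},\{1,2\},\{1,3\},\{2,3\}\}$ with $w(N)=3$, $w(\{4\})=2$, $w(\{i,j\})=1$, and the witness $q_i^N=1$ ($i=1,2,3$), $q_4^{\{4\}}=1$ of worth $5$ against at most $3$ for exact covers. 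Two remarks. First, your verification is more complete than the paper's: the paper only checks that redistributions of elements $1,2,3$ over $\mathcal F\cap 2^{\{1,2,3\}}$ (i.e.\ shrinkings of the witness at $N$) lose worth, whereas you bound $W$ over \emph{all} exact covers via the case split on $\{i:q_i^N>0\}$ and pin down the full range $[0,2]$; that is the level of detail the statement actually requires. Second, the caveat you flag is genuine, but it afflicts the paper's own proof equally: Proposition 4 is established only for $\mathcal F=2^N$ (its shrinking construction needs every $B\subseteq A^+(\textbf q)$ to be available), and under the literal Definition 3 your $\textbf q^*$ --- like the paper's witness --- would count as \emph{exact}, since every feasible shrinking strictly decreases $W$. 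Both arguments therefore tacitly read ``exact'' as the all-or-nothing membership property of Proposition 4's conclusion (equivalently, reducibility to a feasible packing); with that intended reading your argument is sound, and your explicit acknowledgement of the ambiguity is a merit, not a gap.
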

\begin{proof}
By example: $N=\{1,2,3,4\}$ and $\mathcal F=\{N,\{4\},\{1,2\},\{1,3\},\{2,3\}\}$, with worth $w(N)=3$, $w(\{4\})=2$,
$w(\{i,j\})=1$ for $1\leq i<j\leq 3$. Now define $\textbf q=(q_1,\ldots ,q_4)$ by
$q^{\{4\}}_4=1=q^N_i,i=1,2,3$, with non-exactness due to inequality $|\{i:q_i^N>0\}|=3<4=|N|$. As
\begin{equation*}
W(\textbf q)=w(\{4\})+\sum_{1\leq i<j\leq 3}w(\{i,j\})=2+1+1+1
\end{equation*}
and $A^+_{\textbf q}=\{1,2,3\}$, for all distributions $\hat q_1,\hat q_2,\hat q_3$ placing membership only over feasible $B\in\mathcal F\cap 2^{A_{\textbf q}^+}$ global worth is
$W(\hat q_1,\hat q_2,\hat q_3,q_4)<W(\textbf q)$.
\end{proof}

This simple result is useful because it indicates that an inaccurate search among optimal fuzzy covers may lead to a maximizer, either global or local, which is not reducible to any feasible
solution of the original set packing problem. In the present setting, such feasible solutions are partitions $P$ all of whose singleton blocks with worth 0 are not included in the packing.
In fact, similarly to the full-dimensional case, fairly simple conditions may be shown to be sufficient for a partition to be a local maximizer.
\begin{definition}
Any $\hat q_i|\hat{\textbf q}_{-i}=\hat{\textbf q}\in\bar\Delta_N$ is a local maximizer of $W:\bar\Delta_N\rightarrow\mathbb R_+$ if
$W_i(\hat q_i|\hat{\textbf q}_{-i})\geq W_i(q_i|\hat{\textbf q}_{-i})$ for all $i\in N$ and all $q_i\in\bar\Delta_i$ (see (6) above).
\end{definition}
The neighborhood $\mathcal N(\textbf q)\subset\bar\Delta_N$ of $\textbf q\in\bar\Delta_N$ thus is
\begin{equation*}
\mathcal N(\hat{\textbf q})=\underset{i\in N}{\bigcup}\Big\{\textbf q:\textbf q=q_i|\hat{\textbf q}_{-i},q_i\in\bar\Delta_i\Big\}\text .
\end{equation*}
Evidently, there are many partitions $P$ with associated \textbf p such that $\textbf p\in\bar\Delta_N$ (with the notation of corollary 8 above). For example $P_{\bot}=\{\{1\},\ldots ,\{n\}\}$, i.e.
the bottom element of the geometric lattice $(\mathcal P^N, \wedge ,\vee)$ of partitions of $N$ ordered by coarsening \cite{Aigner79}. Other simple examples are given, for every $A\in\mathcal F$, by
the corresponding modular element \cite{Stanley1971} $\{A\}\cup P^{A^c}_{\bot}$ of $(\mathcal P^N, \wedge ,\vee)$ whose unique non-singleton block is $A$. In fact, any partition $P$ such that
$A\in\mathcal F$ for each block $A\in P$ has associated \textbf p satisfying $\textbf p\in\bar\Delta_N$.

\begin{proposition}
Any partition $P$ with associated \textbf p such that $\textbf p\in\bar\Delta_N$ is a local maximizer if for all $A\in P$
\begin{equation*}
w(A)\geq w(\{i\})+\sum_{\hat B\in\mathcal F\cap 2^{A\backslash i}}\mu^w(\hat B)\text .
\end{equation*}
\end{proposition}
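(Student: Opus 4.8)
The plan is to adapt the proof of Proposition 13 to the poset $(\mathcal F,\supseteq)$; the only genuinely new bookkeeping concerns which subsets of a block $A\in P$ happen to lie in $\mathcal F$. First I would record that, by $(6)$ together with the scalar-product form $W_i(q_i|\textbf q_{-i})=\langle q_i,w_{\textbf q_{-i}}\rangle$ (exactly as in $(5)$, now with $w_{\textbf q_{-i}}$ given by $(7)$), the map $q_i\mapsto W_i(q_i|\textbf q_{-i})$ is, for every fixed $\textbf q_{-i}\in\bar\Delta_{N\backslash i}$, linear on the simplex $\bar\Delta_i$ and hence attains its maximum $\max_{B\in\mathcal F_i}w_{\textbf q_{-i}}(B)$ at an extreme point. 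Writing $A$ for the block of $P$ containing a given $i$, so that $p_i$ is the extreme point of $\bar\Delta_i$ with $p_i^A=1$ (well defined since $\textbf p\in\bar\Delta_N$ forces $A\in\mathcal F_i$), Definition 16 then reduces to a single family of scalar inequalities: $\textbf p$ is a local maximizer if and only if $w_{\textbf p_{-i}}(A)\geq w_{\textbf p_{-i}}(B)$ for every $i\in N$ and every $B\in\mathcal F_i$.

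The second step is to evaluate both sides at $\textbf q_{-i}=\textbf p_{-i}$. For the left side: if $B\in\mathcal F_i\cap 2^A$ then every $j\in B\backslash i$ lies in the block $A\in P$, so $p_j^A=1$ and $\prod_{j\in B\backslash i}p_j^A=1$; hence $w_{\textbf p_{-i}}(A)=\sum_{B\in\mathcal F_i\cap 2^A}\mu^w(B)$, and using the disjoint decomposition $\mathcal F\cap 2^A=(\mathcal F_i\cap 2^A)\cup(\mathcal F\cap 2^{A\backslash i})$ together with $w(A)=\sum_{B\in\mathcal F\cap 2^A}\mu^w(B)$ this rewrites as $w_{\textbf p_{-i}}(A)=w(A)-\sum_{\hat B\in\mathcal F\cap 2^{A\backslash i}}\mu^w(\hat B)$, the lower-dimensional analogue of Remark 9. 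For the right side, fix $B\in\mathcal F_i$ with $B\neq A$: since $i\in B$ and the unique $P$-block through $i$ is $A\neq B$, the subset $B$ is not a block of $P$, so $p_j^B=0$ for every $j\in N\backslash i$; therefore in $w_{\textbf p_{-i}}(B)=\sum_{C\in\mathcal F_i\cap 2^B}(\prod_{j\in C\backslash i}p_j^B)\mu^w(C)$ every term with $C\neq\{i\}$ vanishes, while the term $C=\{i\}$ survives through the empty-product convention, giving $w_{\textbf p_{-i}}(B)=\mu^w(\{i\})=w(\{i\})$ (here one uses the standing convention $\{i\}\in\mathcal F$ and $w(\emptyset)=0$).

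Substituting, the criterion of the first step becomes: for every $A\in P$ and every $i\in A$, $w(A)-\sum_{\hat B\in\mathcal F\cap 2^{A\backslash i}}\mu^w(\hat B)\geq w(\{i\})$, which is exactly the displayed hypothesis read with its tacit quantifier $i\in A$; and for elements $i$ whose $P$-block is the singleton $\{i\}$ the inequality holds trivially with equality, since then $w_{\textbf p_{-i}}$ is constant. I expect the one delicate point to be the evaluation of $w_{\textbf p_{-i}}(B)$ for $B\neq A$: one must be careful that $B\notin P$ really does force $p_j^B=0$ for all $j\neq i$, and that the surviving contribution comes solely from $C=\{i\}$ via the empty-product convention --- which is precisely where the hypothesis $\{i\}\in\mathcal F$ (with $w(\{i\})$ allowed to be $0$) is needed. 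The rest is the linear-optimization-over-a-simplex observation already used in Propositions 6 and 13.
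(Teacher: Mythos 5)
Your proof is correct and follows essentially the same route as the paper's: both rest on evaluating $w_{\textbf p_{-i}}(A)=w(A)-\sum_{\hat B\in\mathcal F\cap 2^{A\backslash i}}\mu^w(\hat B)$ and $w_{\textbf p_{-i}}(B)=w(\{i\})$ for $B\neq A$ (the ``collapse on the singleton''), the paper merely packaging the linearity of $q_i\mapsto W_i(q_i|\textbf p_{-i})$ as the identity $W(\textbf p)-W(q_i|\textbf p_{-i})=\left(p_i^A-q_i^A\right)\sum_{B\in\mathcal F_i\cap 2^A,|B|>1}\mu^w(B)$ and concluding by contradiction instead of by your explicit comparison of $w_{\textbf p_{-i}}$ at the extreme points of $\bar\Delta_i$. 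Your treatment of singleton blocks and of the tacit quantifier $i\in A$ also matches the paper's.
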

\begin{proof}
Firstly note that for all blocks $A\in P$, if any, such that $|A|=1$ there is nothing to prove, as the summation reduces to $w(\emptyset)=0$, and thus there only remains
$w(\{i\})\geq w(\{i\})$. Accordingly, let $A\in P$ and $|A|>1$. For every $i\in A$, any membership reallocation $q_i\in\bar\Delta_i$ deviating from $p_i$ (i.e. $p_i^A=1$), given memberships
$\textbf p_{-i}$ of other elements $j\in N\backslash i$ (i.e. $\bar\Delta_j\ni p_j^{A'}=1$ for all $A'\in P$ and all $j\in A'$), yields $\textbf q=(q_i|\textbf p_{-i})\in\bar\Delta_N$ which is
best analyzed by distinguishing between $\mathcal F_i\backslash A$ and $A$. In particular,
\begin{equation*}
w(A)=w(\{i\})+\sum_{\underset{|B|>1}{B\in\mathcal F_i\cap 2^A}}\mu^w(B)+\sum_{\hat B\in\mathcal F\cap 2^{A\backslash i}}\mu^w(\hat B)\text .
\end{equation*}
All membership mass $\sum_{B\in\mathcal F_i\backslash A}q_i^B>0$ collapses on singleton $\{i\}$, because $\underset{i'\in B\backslash i}{\prod}p_{i'}^B=0$ for all $B\in\mathcal F_i\backslash A$
by the definition of $\textbf p_{-i}$ (see example 2 above). Thus,
\begin{equation*}
W(\textbf p)-W(q_i|\textbf p_{-i})=w(A)-w(\{i\})+
\end{equation*}
\begin{equation*}
-\left(q_i^A\sum_{\underset{|B|>1}{B\in\mathcal F_i\cap 2^A}}\mu^w(B)+\sum_{\hat B\in\mathcal F\cap 2^{A\backslash i}}\mu^w(\hat B)\right)=
\end{equation*}
\begin{equation*}
=\left(p_i^A-q_i^A\right)\sum_{\underset{|B|>1}{B\in\mathcal F_i\cap 2^A}}\mu^w(B)\text .
\end{equation*}
Now assume that $\textbf p$ is \textit{not} a local maximizer, i.e. $W(\textbf p)-W(q_i|\textbf p_{-i})<0$. Since $p_i^A-q_i^A>0$ because $p_i^A=1$
and $q_i\in\bar\Delta_i$ is a deviation from $p_i$, then
\begin{equation*}
\sum_{\underset{|B|>1}{B\in\mathcal F_i\cap 2^A}}\mu^w(B)=w(A)-w(\{i\})-\sum_{\hat B\in\mathcal F\cap 2^{A\backslash i}}\mu^w(\hat B)<0
\end{equation*}
must hold. This contradicts the premise $w(A)\geq w(\{i\})+\sum_{\hat B\in\mathcal F\cap 2^{A\backslash i}}\mu^w(\hat B)$ for all $A\in P$ and $i\in A$, thus completing the proof.
\end{proof}

\section{Local search with cost}
In order to design a gradient-based local search for this lower-dimensional case, the only tool still missing is the derivative, which clearly shall reproduce definition 11 above with
$\mathcal F_i$ in place of $2^N_i$. Before that, as outlined in section 1, let $c:\mathcal F\rightarrow\mathbb N$ count the number $c(A)=|\{B:B\in\mathcal F,B\cap A\neq\emptyset\}|$ of feasible
subsets with which each $A\in\mathcal F$ has non-empty intersection, itself included, i.e. $c(A)\in\{1,\ldots ,|\mathcal F|\}$ is the cost of including $A$ in the packing.
Accordingly, the underlying poset function $\hat w:\mathcal F\rightarrow\mathbb R_+$ now used (still taking positive values only) incorporates both weights $w(A),A\in\mathcal F$ (used thus far)
and costs by means of ratio $\hat w(A)=\frac{w(A)}{c(A)}$. The result is quasi-objective function $\hat W:\bar\Delta_N\rightarrow\mathbb R_+$, obtained via MLE
$f^{\hat w}:\mathcal C(\mathcal F)\rightarrow\mathbb R_+$ of $\hat w$, i.e. $\hat W(\textbf q)=\sum_{A\in\mathcal F}f^{\hat w}(q^A)$, and all of the above applies invariately simply replacing
$W$ with $\hat W$.
\begin{definition}
The $(i,A)$-derivative of $\hat W$ at $\textbf q\in\bar\Delta_N$,
\begin{equation*}
\text{$A\in\mathcal F_i$, is }\partial\hat W(\textbf q)/\partial q^A_i=\hat W(\overline{\textbf q}(i,A))-\hat W(\underline{\textbf q}(i,A))=
\end{equation*}
\begin{equation*}
=\hat W_i\Big(\overline q_i(i,A)|\overline{\textbf q}_{-i}(i,A)\Big)-\hat W_i\Big(\underline q_i(i,A)|\underline{\textbf q}_{-i}(i,A)\Big)\text ,
\end{equation*}
with $\overline{\textbf q}(i,A)=\Big(\overline q_1(i,A),\ldots ,\overline q_n(i,A)\Big)$ given by
\begin{equation*}
\overline q_j^B(i,A)=\left\{\begin{array}{c} q_j^B\text{ for all }j\in N\backslash i,B\in\mathcal F_j \\ 1\text{ for }j=i,B=A\\0\text{ for }j=i,B\neq A \end{array}\right.
\text ,
\end{equation*}
and $\underline{\textbf q}(i,A)=\Big(\underline q_1(i,A),\ldots ,\underline q_n(i,A)\Big)$ given by
\begin{equation*}
\underline q_j^B(i,A)=\left\{\begin{array}{c} q_j^B\text{ for all }j\in N\backslash i,B\in\mathcal F_j \\ 0\text{ for }j=i\text{ and all }B\in\mathcal F_i \end{array}\right.
\text .
\end{equation*}
The (full) gradient of $\hat W$ at $\textbf q\in\bar\Delta_N$ is
\begin{equation*}
\nabla\hat W(\textbf q)=\left\{\partial\hat W(\textbf q)/\partial q^A_i:i\in N,A\in\mathcal F_i\right\}\in\mathbb R^{\sum_{i\in N}|\mathcal F_i|}
\end{equation*}
as well as the
$i$-gradient $\nabla_i\hat W(\textbf q)\in\mathbb R^{|\mathcal F_i|}$ of $\hat W$ at $\textbf q=(q_i|\textbf q_{-i})\in\bar\Delta_N$ is poset function
$\nabla_i\hat W(\textbf q):\mathcal F_i\rightarrow\mathbb R$ defined by
$\nabla_i\hat W(\textbf q)(A)=\hat w_{\textbf q_{-i}}(A)$ for all $A\in\mathcal F_i$, where $\hat w_{\textbf q_{-i}}$ is given by (7) with $\hat w$ in place of $w$. Again, membership distribution
$\underline q_i(i,A)$ is the null one: its $|\mathcal F_i|$ entries are all 0, hence $\underline q_i(i,A)\not\in\bar\Delta_i$.
\end{definition}
\smallskip

\textsc{LS-WithCost}$(\hat w,\textbf q)$\smallskip

\textsl{Initialize:} Set $t=0$ and $\textbf q(0)=\textbf q$, with requirement $|\{i:q_i^A>0\}|\in\{0,|A|\}$ for all $A\in\mathcal F,\hat w(A)>0$.\smallskip

\textsl{Loop 1:} While $0<\sum_{i\in A}q^A_i(t)<|A|$ for a $A\in\mathcal F$, set $t=t+1$ and:
\begin{enumerate}
\item[(a)] select a $A^*(t)\in\mathcal F$ such that
\begin{equation*}
\min_{i\in A^*(t)}\text{ }\hat w_{\textbf q_{-i}(t-1)}(A^*(t))\geq\min_{j\in B}\text{ }\hat w_{\textbf q_{-j}(t-1)}(B)
\end{equation*}
for all $B\in 2^N$ such that $0<\sum_{i\in B}q^B_j(t)<|B|$,
\item[(b)] for $i\in A^*(t)$ and $A\in \mathcal F_i$, define
\begin{equation*}
q_i^A(t)=
\left\{\begin{array}{c}1\text{ if }A=A^*(t)\text ,\\
0\text{ if }A\neq A^*(t)\text ,\end{array}\right.
\end{equation*}
\item[(c)] for $j\in N\backslash A^*(t)$ and $A\in\mathcal F_j$ with $A\cap A^*(t)=\emptyset$, define $q^A_j(t)=q_j^A(t-1)+$
\begin{equation*}
+\left(\hat w(A)\sum_{\underset{B\cap A^*(t)\neq\emptyset}{B\in\mathcal F_j}}q_j^B(t-1)\right)
\left(\sum_{\underset{B'\cap A^*(t)=\emptyset}{B'\in\mathcal F_j}}\hat w(B')\right)^{-1}
\end{equation*}
\item[(d)] for $j\in N\backslash A^*(t)$ and $A\in\mathcal F_j$ with $A\cap A^*(t)\neq\emptyset$, define
\begin{equation*}
q^A_j(t)=0\text .
\end{equation*}
\item[(e)] for $A\in\mathcal F$ with $A\cap A^*(t)=\emptyset$, update cost function by
\begin{equation*}
c(A)=|\{B:B\in\mathcal F,B\cap A\neq\emptyset=B\cap A^*(t)\}|
\end{equation*}
and plug it into $\hat w$.
\end{enumerate}
\textsl{Loop 2:} While $q_i^A(t)=1,|A|>1$ for a $i\in N$ and
\begin{equation*}
w(A)<w(\{i\})+\sum_{\hat B\in\mathcal F\cap 2^{A\backslash i}}\mu^w(\hat B)\text ,
\end{equation*}
set $t=t+1$ and define:
\begin{eqnarray*}
q^{\hat A}_i(t)&=&\left\{\begin{array}{c}1\text{ if }|\hat A|=1\\
0\text{ otherwise}\end{array}\right .\text{ for all }\hat A\in\mathcal F_i\text ,\\
q^{B}_j(t)&=&\left\{\begin{array}{c}1\text{ if }B=A\backslash i\\
0\text{ otherwise}\end{array}\right .\text{ for all }j\in A\backslash i,B\in\mathcal F_j\text ,\\
q^{\hat B}_{j'}(t)&=&q^{\hat B}_{j'}(t-1)\text{ for all }j'\in A^c,\hat B\in\mathcal F_{j'}\text .
\end{eqnarray*}
\textsl{Output:} Set $\textbf q^*=\textbf q(t)$.\smallskip

Both \textsc{LocalSearch} and \textsc{LS-WithCost} generate in $|P|\leq n$ iterations of \textsl{Loop 1} a partition
$\{A^*(1),\ldots ,A^*(|P|)\}=P$. Now selected blocks $A^*(t)\in\mathcal F$, $1\leq t\leq |P|$ are any of those feasible subsets where the minimum over elements
$i\in A^*(t)$ of $(i,A^*(t))$-derivatives $\partial\hat W(\textbf q(t-1))/\partial q^{A^*(t)}_i(t-1)$ is maximal. The following \textsl{Loop 2} again checks whether the partition
generated by \textsl{Loop 1} may be improved by exctracting some elements from existing blocks and putting them in singleton blocks of the final output, which thus allows for the following.
\begin{proposition}
\textsc{LS-WithCost}$(W,\textbf q)$ outputs a local maximizer $\textbf q^*$.
\end{proposition}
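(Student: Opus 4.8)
The plan is to mirror the proofs of propositions 13 and 16 and then reduce to the latter. First I would show that the output $\textbf q^*=\textbf q(t^*)$ is the $\bar\Delta_N$-representative $\textbf p$ of a partition $P$ all of whose blocks belong to $\mathcal F$ (with the notation of corollary 8). Each iteration of \textsl{Loop 1} picks some $A^*(t)\in\mathcal F$ and, via lines (b)--(d), makes every $i\in A^*(t)$ concentrate its whole membership on $A^*(t)$ while every $j\notin A^*(t)$ shifts its membership off the feasible subsets meeting $A^*(t)$ onto feasible subsets disjoint from it; since $N$ is finite this occurs at most $n$ times, producing $P=\{A^*(1),\ldots,A^*(|P|)\}$, and lines (b)--(d) are designed so that the initialization invariant $|\{i:q_i^A>0\}|\in\{0,|A|\}$ survives every step. \textsl{Loop 2} only replaces a non-singleton block $A$ by $A\backslash i$ together with the singleton $\{i\}$ whenever its guard holds, so it terminates after finitely many refinements and again leaves a partition into feasible subsets; hence $\textbf q^*=\textbf p$ with $\textbf p\in\bar\Delta_N$.

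The main step is then to invoke proposition 16. Termination of \textsl{Loop 2} is exactly the failure of its guard, i.e. every block $A\in P$ with $|A|>1$ satisfies $w(A)\geq w(\{i\})+\sum_{\hat B\in\mathcal F\cap 2^{A\backslash i}}\mu^w(\hat B)$ for each $i\in A$ (singleton blocks satisfying this trivially), which is precisely the hypothesis of proposition 16. Therefore $\textbf p$ is a local maximizer of $W:\bar\Delta_N\to\mathbb R_+$ in the sense of definition 15, as claimed.

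If one prefers a self-contained argument rather than a citation of proposition 16, I would re-derive its content in place: fix $i\in N$ with block $A\in P$. If $|A|=1$, any $q_i\in\bar\Delta_i$ deviating from $p_i^{\{i\}}=1$ leaves $W$ unchanged because $\prod_{j\in B\backslash i}p_j^B=0$ for every $B\in\mathcal F_i$ with $|B|>1$ (see example 2), so $W_i(p_i|\textbf p_{-i})\geq W_i(q_i|\textbf p_{-i})$ holds trivially. If $|A|>1$, a deviation $q_i$ from $p_i^A=1$ collapses the freed mass $\sum_{B\in\mathcal F_i\backslash A}q_i^B$ onto the singleton $\{i\}$ and gives $W(\textbf p)-W(q_i|\textbf p_{-i})=(p_i^A-q_i^A)\sum_{B\in\mathcal F_i\cap 2^A,\,|B|>1}\mu^w(B)$, which is nonnegative because $p_i^A-q_i^A\geq 0$ and the remaining sum is nonnegative once \textsl{Loop 2} has stopped; ranging over $i\in N$ yields definition 15.

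The step I expect to demand the most care is the cost bookkeeping of line (e). Since the cost $c$, and hence $\hat w$, is rewritten at each pass of \textsl{Loop 1}, I must check that this does not disturb the conclusion. The point to make precise is that line (e) alters $c(A)$ only for feasible $A$ with $A\cap A^*(t)=\emptyset$ --- the subsets still eligible to become later blocks --- and never changes the M\"obius masses $\mu^w(B)$ for $B\subseteq A\in P$ entering the identity of the previous paragraph; so once a block of $P$ is fixed, its internal arithmetic and the \textsl{Loop 2} test applied to it are insulated from all later updates. One must also confirm the remaining routine points: that the block-selection rule in line (a), run with the current $\hat w_{\textbf q_{-i}}$, still lets \textsl{Loop 1} exhaust $N$; that line (c) is well-defined, its denominator being positive whenever some $j\notin A^*(t)$ must redistribute membership; and that lines (b)--(d) indeed preserve the exactness invariant. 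With these settled, the proposition follows, its substantive content lying entirely in propositions 13 and 16.
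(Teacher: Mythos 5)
Your proposal is correct and follows essentially the same route as the paper, whose entire proof is the one-line observation that the claim "follows from proposition 16 since \textsl{Loop 2} deals with $w$, not with $\hat w$" --- exactly your key point that the termination of \textsl{Loop 2} is the hypothesis of proposition 16 and that the cost updates of line (e) only touch $\hat w$, leaving the $w$-based test untouched. Your additional checks (that the output is a partition in $\bar\Delta_N$, the re-derivation of proposition 16, the insulation of fixed blocks from later cost updates) are details the paper leaves implicit.
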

\begin{proof}
Follows from proposition 16 since \textsl{Loop 2} deals with $w$, not with $\hat w$.
\end{proof}

Concerning input $\textbf q=\textbf q(0)$, consider again setting
\begin{equation*}
q_i^A=\frac{\hat w(A)}{\sum_{B\in\mathcal F_i}\hat w(B)}\text{ for all }A\in\mathcal F_i,i\in N\text{, which entails}
\end{equation*}
\begin{equation*}
\frac{q_i^A}{q_i^B}=\frac{w(A)c(B)}{w(B)c(A)}=\frac{q_j^A}{q_j^B}\text{ for all }A,B\in\mathcal F_i\cap\mathcal F_j,i,j\in N\text .
\end{equation*}

Evidently, \textsl{Loop 1} may take exactly the same form as in \textsc{LocalSearch}, that is with selected blocks $A^*(t)\in\mathcal F,t=1,\ldots ,|P|$ of the generated partition $P$ being any
of those feasible subsets where the sum, rather than the minimum, over elements $i\in A^*(t)$ of $(i,A^*(t))$-derivatives $\partial\hat W(\textbf q(t-1))/\partial q^{A^*(t)}_i(t-1)$ is
maximal. This possibility seems appropriate in applicative scenarios, where set packing is mostly dealt with in its weighted version. Yet, using the minimum in place of the sum, although
computationally more demanding, appears interesting for $k$-uniform set packing problems (see section 1), widely studied in computational complexity. In fact, for the $k$-uniform case M\"obius
inversion is $\mu^{\hat w}(A)=\frac{1}{c(A)}$ if $|A|=k$ and $\mu^{\hat w}(A)=0$ if $|A|\in\{0,1\}$ for all $A\in\mathcal F$ (recall the convention $\{\emptyset\}\in\mathcal F\ni\{i\}$ for all $i\in N$),
with the cost function updated at each iteration according to line (e). It is also evident that in $k$-uniform set packing \textsl{Loop 2} is ineffective.

\section{Near-Boolean functions}
Boolean functions \cite{BooleanFunctions} provide key analytical tools and methods with a variety of important applications. Beyond set packing problems that here constitute the main benchmark,
this section further develops the full-dimensional case detailed in section 2 with the aim to indicate additional opportunities obtained from expanding the standard framework where pseudo-Boolean models
are traditionally exploited. Recall that Boolean functions of $n$ variables have form $f:\{0,1\}^n\rightarrow\{0,1\}$, and constitute a subclass of pseudo-Boolean functions
$f:\{0,1\}^n\rightarrow\mathbb R$, which in turn admit the unique MLE $\hat f:[0,1]^n\rightarrow\mathbb R$ over the whole $n$-dimensional unit hypercube extensively employed thus far. The $n$ variables
thus range each in the unit interval $[0,1]$. Such a setting is here expanded by letting each variable $i=1,\ldots ,n$ range in a $2^{n-1}-1$-dimensional simplex $\Delta_i$, with the goal to
evaluate collections of fuzzy subsets of a $n$-set through the MLE given by (1) and (2).
\begin{definition}
Near-Boolean functions of $n$ variables have form
\begin{equation*}
F:\underset{1\leq i\leq n}{\times}ex(\Delta_i)\rightarrow\mathbb R\text .
\end{equation*}
\end{definition}
Following \cite[p. 4]{HammerHolzman}, denote by $N=\{1,\ldots ,n\}$ the set of indices of variables (i.e., the ground set in previous sections). As already observed, any
pseudo-Boolean function has a unique expression as a multilinear polynomial $f$ in $n$ variables: $f(x_1,\ldots ,x_n)=\underset{A\subseteq N}{\sum}\left(\alpha_A\underset{i\in A}{\prod}x_i\right)$.
Uniqueness is customarily shown by induction on the size $0\leq |A|\leq n$ of subsets of variables \cite[p. 162]{BorosHammer02}, although it seems a by-product of the M\"obius inversion of (po)set
functions \cite{Rota64}, which is unique indeed, i.e. $\alpha_A,A\in 2^N$ simply is the M\"obius inversion of some unique set function $w:2^N\rightarrow\mathbb R$ such that $w(A)=f(\chi_A)$, where
$\chi_A$ is the characteristic function defined in section 2, i.e. $\chi:2^N\rightarrow\{0,1\}^n$ with $\chi(A)=\chi_A$.
\begin{definition}
The MLE $\hat F$ of near-Boolean functions $F$ has polynomial form
\begin{equation*}
\hat F:\underset{1\leq i\leq n}{\times}\Delta_i\rightarrow\mathbb R
\end{equation*}
given by expression (2) in section 2, that is
\begin{equation*}
\hat F(\textbf q)=\sum_{A\in 2^N}\left[\sum_{B\supseteq A}\left(\prod_{i\in A}q_i^B\right)\right ]\mu^w(A)\text ,
\end{equation*}
 with (see above) $\textbf q=(q_1,\ldots ,q_n)$ and $q_i=(q_i^{A_1},\ldots ,q_i^{A_{2^{n-1}}})\in\Delta_i$.
\end{definition}
\subsection{$k$-degree approximations}
In line with \cite{HammerHolzman}, the issue of approximating a given near-Boolean function $F$ by means of the least squares criterion amounts to determine a near-Boolean function $F_k$ such that
\begin{equation}
\sum_{\textbf q\in\underset{i\in N}{\times}ex(\Delta_i)}\left[F(\textbf q)-F_k(\textbf q)\right]^2
\end{equation}
attains its minimum over all near-Boolean functions $F_k$ with polynomial MLE $\hat F_k$ of degree $k$, that is
\begin{equation*}
\hat F_k(\textbf q)=\sum_{\underset{|A|\leq k}{A\in 2^N}}\left[\sum_{B\supseteq A}\left(\prod_{i\in A}q_i^B\right)\right ]\mu^w(A)\text ,
\end{equation*}
or, equivalently stated in terms of the underlying set function $w$, such that $\mu^w(A)=0$ if $|A|>k$.

Near-Boolean functions $F$ take their values on $n$-product $\underset{i\in N}{\times}ex(\Delta_i)$, and $|ex(\Delta_i)|=2^{n-1}$ for each $i\in N$. They might thus be regarded as points
$F\in\mathbb R^{2^{n(n-1)}}$ in a $2^{n(n-1)}$-dimensional vector space. In view of proposition 4 concerning exactness, this seems conceptually incorrect and with useless enumerative demand. Specifically,
for every partition $P\in\mathcal P^N$ with associated $\textbf p\in\underset{i\in N}{\times}ex(\Delta_i)$, there clearly exist many non-exact $\textbf q\in \underset{i\in N}{\times}ex(\Delta_i)$ such that
$F(\textbf q)=F(\textbf p)$ (see corollary 8 above). Counting these redundant extreme points of simplices appears wothless. For this reason, $k$-degree approximation is dealt with by replacing expression (8)
with the following, applying to partitions \textbf p only
\begin{equation}
\sum_{\textbf p\in\underset{i\in N}{\times}ex(\Delta_i)}\left[F(\textbf p)-F_k(\textbf p)\right]^2\text .
\end{equation}
The number $|\mathcal P^N|$ of partitions of a $n$-set is given by \textit{Bell number} $\mathcal B_n$ \cite{Rota1964,Aigner79}. Accordingly, near-Boolean functions might be regarded as points
$F\in\mathbb R^{\mathcal B_n}$ in a $\mathcal B_n$-dimensional vector space. Still, this also is far too large, as points in such a vector space correspond in fact to generic partition functions, i.e. with
M\"obius inversion free to live on every partition $P\in\mathcal P^N$. Conversely, near-Boolean functions factually involve only partition functions $h:\mathcal P^N\rightarrow\mathbb R$ such that
$h(P)=h_w(P)=\sum_{A\in P}w(A)$ for some set function $w:2^N\rightarrow\mathbb R$. The M\"obius inversion of these partition functions lives only on the $2^n-n$ modular elements \cite{Stanley1971} of lattice
$(\mathcal P^N, \wedge ,\vee)$, namely on those partitions with a number of non-sigleton blocks $\leq 1$. In turn, this entails that when regarded as points in a vector space (i.e. expressed as a
linear combination of a basis, see above) these functions may be seen as $h_w\in\mathbb R^{2^n-n}$. This can be shown via recursion through the M\"obius inversion of \textit{additively separable}
partition functions; it is here omitted being contained in \cite{GilboaLehrer90GG,GilboaLehrer91VI}.

When all these facts are properly taken into account, the issue of $k$-degree approximation for near-Boolean functions is seen to reduce to the same issue for traditional pseudo-Boolean
functions, which is already exhaustively detailed in \cite{HammerHolzman}. What is crucial emphasizing though, is that while for pseudo-Boolean functions there exists a unique best $k$-degree
approximation for all $0\leq k\leq n$, on the other hand every near-Boolean function admits a continuum of set functions $w$ determining their unique best $k$-degree approximation, and this applies
to all $0<k\leq n$. Furthermore, case $k=0$ is of no concern in
that the emptyset cannot be a block of any partition. In particular, consider the linear case, which is the main case in \cite[p. 4]{HammerHolzman}. The issue is to find a best (least squares) approximation
$F_1$ of any given $F$. That is, the set function $w$ determining $F_1$ has to satisfy $w(A)=\sum_{i\in A}w(\{i\})$ for all $A\in 2^N$. Then,
\begin{equation*}
h_w(P)=\sum_{A\in P}w(A)=\sum_{A\in P}\sum_{i\in A}w(\{i\})=w(N)
\end{equation*}
for all $P\in\mathcal P^N$. Thus, $h_w$ is a constant partition function, i.e. a valuation \cite{Aigner79} of partition lattice $(\mathcal P^N, \wedge ,\vee)$. Also, any further linear
$v:2^N\rightarrow\mathbb R$ such that $v(N)=w(N)$ satisfies $h_v(P)=h_w(P)$ for all $P\in\mathcal P^N$. This means that there is a continuum (i.e. a non-unit $n-1$-dimensional simplex) of equivalent linear
$v\neq w$ such that $h_w=h_v$, obtained each by distributing arbitrarily the whole of $w(N)$ over the $n$ singletons $\{i\}\in 2^N$. Cases $k>1$ are still all the same. To see this, consider a set function
$w$ such that $\mu^w(A)\neq 0$ for one or more (possibly all $\binom{n}{k}$) subsets $A\in 2^N$ such that $|A|=k$. Now fix arbitrarily $n$ values $v(\{i\}),i\in N$ with
$\sum_{i\in N}w(\{i\})=\sum_{i\in N}v(\{i\})$. For all $A\in2^N,|A|>1$ M\"obius inversion $\mu^v:2^N\rightarrow\mathbb R$ can always be determined uniquely through recursion by 
\begin{equation*}
v(A)+\sum_{i\in A^c}v(\{i\})=\sum_{B\subseteq A}\mu^v(B)+\sum_{i\in A^c}v(\{i\})=
\end{equation*}
\begin{equation*}
=w(A)+\sum_{i\in A^c}w(\{i\})=\sum_{B\subseteq A}\mu^w(B)+\sum_{i\in A^c}w(\{i\})\text .
\end{equation*}
One thing must be absolutely clear: there is a continuum of equivalent set functions (i.e. $w$ and $v$) available for the sought $k$-degree approximation $F_k$, but still the $\mathcal B_n$ values
taken by $F_k$ are unique and indpendent from the chosen set function in the continuum available (and thus such $\mathcal B_n$ values are also unique for any fixed $F$ to be approximated). Any $F$ clearly is
its own best $n$-degree approximation.
\begin{remark}
There is a continuum of set functions $w$ equivalently determining the MLE $\hat F$ of $F$. 
\end{remark}
\subsection{Near-Boolean games}
In view of the above definition of local maximizers relying on equilibrium conditions for strategic $n$-player games, and having mentioned additive separablity of partition functions or global games
\cite{GilboaLehrer90GG}, it seems now natural to  consider variables as players in near-Boolean games (see also \cite[section 3]{HammerHolzman}).
\begin{definition}
A near-Boolean $n$-player game is a triple $(N,F,\pi)$ such that $N=\{1,\ldots ,n\}$ is the player set and $F$ is a near-Boolean function taking real values on
profiles $\textbf q=(q_1,\ldots ,q_n)\in\underset{i\in N}{\times}ex(\Delta_i)$ of strategies, while payoffs $\pi:\underset{i\in N}{\times}ex(\Delta_i)\rightarrow\mathbb R^n$ are efficient, namely
$\pi(\textbf q)=(\pi_1(\textbf q),\ldots ,\pi_n(\textbf q))$ satisfies $\sum_{i\in N}\pi_i(\textbf q)=F(\textbf q)$ at all $\textbf q\in\underset{i\in N}{\times}ex(\Delta_i)$.
\end{definition}
\begin{definition}
A fuzzy near-Boolean $n$-player game is a triple $(N,\hat F,\pi)$ such that $N=\{1,\ldots ,n\}$ is the player set and $\hat F$ is the MLE of a near-Boolean function taking
real values on strategy profiles $\textbf q=(q_1,\ldots ,q_n)\in\underset{i\in N}{\times}\Delta_i$, while $\pi:\underset{i\in N}{\times}\Delta_i\rightarrow\mathbb R^n$ efficiently assigns payoffs
$\pi(\textbf q)=(\pi_1(\textbf q),\ldots ,\pi_n(\textbf q))$ to players, i.e. $\sum_{i\in N}\pi_i(\textbf q)=\hat F(\textbf q)$ at all $\textbf q\in\underset{i\in N}{\times}\Delta_i$.
\end{definition}
Game theory since many years is mostly concerned with finite sets of players, which is indeed the case both for near-Boolean games and fuzzy ones. Given this, a main distinction is between games where players
have either finite or else infinite sets of strategies, with near-Boolean games in the former class and fuzzy ones in the latter. In addition, players may play either deterministic (i.e. pure) or else
randomized (i.e. mixed) strategies. In the latter case equilibrium conditions are stated in terms of expected payoffs, and by means of fixed point arguments for upper hemicontinuous correspondences such
conditions are commonly fulfilled \cite[p. 260]{Micro}. The sets of deterministic strategies in fuzzy near-Boolean games are precisely the sets of randomized strategies in near-Boolean games. Nevertheless, the
payoffs for the fuzzy setting evidently are \textit{not} expectations.   

The main framework where (possibly fuzzy) near-Boolean games seem appropriate is coalition formation, which combines both strategic and cooperative games. A generic strategy profile
$\textbf q\in\underset{i\in N}{\times}ex(\Delta_i)$ of near-Boolean (non-fuzzy) games may well fail to be exact (see proposition 4), but it is understood at this point that there is a
unique partition $P$ of $N$ with associated $\textbf p\in\underset{i\in N}{\times}ex(\Delta_i)$ such that $F(\textbf p)=F(\textbf q)$. Let $\textbf p(\textbf q)$ be such a unique \textbf p.
In view of the above discussion on approximations, it is also clear that for every $\textbf p$ there are many \textbf q such that
\textbf p=\textbf p(\textbf q). In these terms, near-Boolean games model stategic coalition formation in a very handy manner, in that they totally by-pass the need to define a mechanism mapping
strategy profiles into partitions (or coalition structures) of players \cite{Slikker2001436} . More precisely, a mechanism is a mapping $M:\underset{i\in N}{\times}ex(\Delta_i)\rightarrow\mathcal P^N$
such that when each player $i\in N$ specifies a coalition $A_i\in 2^N_i$, then $M(A_1,\ldots ,A_n)=P$ is a resulting partition. If the $n$ specified coalitions $A_i,i\in N$ are such that for some partition
$P$ it holds $A_i=A$ for all $i\in A$ and all $A\in P$, then $M(A_1,\ldots ,A_n)=P$. Otherwise, the partition $P'=M(A_1,\ldots ,A_n)$ generated by the mechanism shall be a rather fine one, i.e. possibly
consisting of many small blocks (depending on the chosen mechanism). Conversely, near-Boolean games do not need any mechanism, in that even if players' strategies $(q_1,\ldots ,q_n)=\textbf q$ are such that
\textbf q does not correspond to a partition, still the global worth $F(\textbf q)$ is that attained at the partition $P$ with corresponding \textbf p(\textbf q), i.e. whose blocks $A\in P$ each include maximal
subsets of players choosing the same $A'\supseteq A$. 

Now \textit{fix} a set function or coalitional game $v:2^N\rightarrow\mathbb R_+,v(\emptyset)=0$ such that $F(\textbf p)=\sum_{A\in P}v(A)$ for all partitions $P$, and let the payoffs
be defined, for all $i\in A$ and all $A\in P$, by
\begin{equation*}
\pi_i(\textbf q)=\sum_{B\in 2^A\backslash 2^{A\backslash i}}\frac{\mu^{w}(B)}{|B|}\text ,
\end{equation*}
where $P$ is the partition with associated \textbf p(\textbf q). This is in fact a well-known coalition formation game, where payoffs are given by the \textit{Shapley value} \cite{Roth88}.
\begin{definition}
A local maximizer $\textbf q\in\underset{i\in N}{\times}ex(\Delta_i)$ of near-Boolean function $F$ satisfies for all $i\in N$ and all $q'_i\in ex(\Delta_i)$ inequality
$F(\textbf q)\geq F(q_i'|\textbf q_{-i})$.
\end{definition}
\begin{remark}
If payoffs are given by $\pi_i(\textbf q)=\frac{\omega_iF(\textbf q)}{\sum_{j\in N}\omega_j}$ for all $i\in N$, with $\omega_1,\ldots,\omega_n>0$, then near Boolean games are (pure) common interest
potential games \cite{MondererShapley96,Bowles}. The set of equilibria of near-Boolean game $(N,F,\pi)$ coincides with the set of local maximizers of $F$.
\end{remark}
\section{Conclusions}
Via polynomial MLE, the proposed near-Boolean functions of $n$ variables take values on the $n$-product of high-dimensional unit simplices.
This enables to approach discrete optimization problems, namely set packing, with an objective function defined over a continuous domain, with feasible solutions found at extreme points of the simplices.
Approximations with polynomials of bounded degree are finally discussed, while near-Boolean $n$-player games provide a new modeling of strategic coalition formation.  

\bibliographystyle{abbrv}
\bibliography{biblioContinuousSetPacking}

\end{document}